




\documentclass{ecai} 



\usepackage{amssymb}
\usepackage{amsmath}
\usepackage{amsthm}

\usepackage{algorithm}
\usepackage{thm-restate}
\usepackage{mathtools}
\usepackage[noend]{algpseudocode}
\usepackage{amsfonts}
\usepackage{tikz}
\usepackage{subfigure}

\usepackage[english]{babel}
\usepackage[autostyle, english = american]{csquotes}
\MakeOuterQuote{"}

\tikzset{blkvertex/.style={black, draw=black, circle, fill, scale=0.5}}

\tikzset{sqvertex/.style={black, draw=black, rectangle}}

\tikzset{circlevertex/.style={black, draw=black, circle, fill, scale=0.5}}

\tikzset{emptyvertex/.style={black, draw=black, circle, scale=0.5}}

\tikzset{diamondvertex/.style={black, draw=black, diamond, fill, scale=0.5}}

\usetikzlibrary{decorations.markings}
\usetikzlibrary{arrows.meta}
\tikzset{arc/.style={decoration={
			markings,
			mark=at position 0.6 with {\arrow{>[scale=2.5,
					length=2,
					width=2.5]}}},postaction={decorate}}}



\newtheorem{theorem}{Theorem}
\newtheorem{lemma}[theorem]{Lemma}

\newtheorem{proposition}[theorem]{Proposition}

\newtheorem{definition}{Definition}



\newcommand{\BibTeX}{B\kern-.05em{\sc i\kern-.025em b}\kern-.08em\TeX}


\begin{document}


\begin{frontmatter}


\paperid{0561} 


\title{Polynomial-Time Algorithms for Fair Orientations of Chores}


\author[A]{\fnms{Kevin}~\snm{Hsu}\orcid{0000-0002-8932-978X}\thanks{Corresponding Author. Email: kevinhsu996@gmail.com}}
\author[A]{\fnms{Valerie}~\snm{King}\orcid{0000-0001-7311-7427}}

\address[A]{University of Victoria}


\begin{abstract}
This paper addresses the problem of finding fair orientations of graphs of chores, in which each vertex corresponds to an agent, each edge corresponds to a chore, and a chore has zero marginal utility to an agent if its corresponding edge is not incident to the vertex corresponding to the agent. Recently, Zhou~et~al.\ (IJCAI,~2024) analyzed the complexity of deciding whether graphs containing a mixture of goods and chores have EFX orientations, and conjectured that deciding whether graphs containing only chores have EFX orientations is NP-complete. We resolve this conjecture by giving polynomial-time algorithms that find EF1 and EFX orientations of graphs containing only chores if they exist, even if there are self-loops. Remarkably, our result demonstrates a surprising separation between the case of goods and the case of chores, because deciding whether graphs containing only goods have EFX orientations was shown to be NP-complete by Christodoulou et al.\ (EC,~2023). In addition, we show the EF1 and EFX orientation problems for multigraphs to be NP-complete.
\end{abstract}

\end{frontmatter}


\section{Introduction}

Consider the problem of dividing a set of items among a group of agents with the stipulation that each item must be given to exactly one agent and cannot be shared. This problem models a wide variety of settings, such as assigning offices in a building to employees, allocating classrooms to multiple simultaneous lectures, and even divorce settlements between spouses. The classical example of dividing a single good between two agents shows that it is not always possible to find an envy-free allocation, necessitating the search for alternative criteria of fairness.

The most prominent alternative is envy-freeness up to any good (EFX) \citep{caragiannis2019unreasonable,gourves2014near}, which allows for each agent $i$ to envy each agent $j$, as long as the envy can be alleviated if the good in $j$'s bundle that has the smallest marginal utility for $i$ is removed from $j$'s bundle. The existence problem of EFX allocations has proven to be notoriously difficult and has been referred to as "fair division's most enigmatic question" by \citet{procaccia2020technical}. Fruitful results have been obtained by restricting instances. For example, EFX allocations exist if the agents have identical utility functions or identical ranking of the goods~\citep{plaut2020almost}, if the agents have lexicographic utility functions~\citep{hosseini2023fairly}, or if there are at most 3 agents~\citep{chaudhury2020efx,akrami2024efx}.

Another possibility is to limit the number of agents who are interested in each item. The assumption that each item provides non-zero marginal utility to at most two agents leads to the graphical instances introduced by \citet{christodoulou2023fair}. These instances can be represented by a graph $G$ wherein each vertex (resp.\ edge) corresponds to an agent (resp.\ good), and assuming that whenever an edge is not incident to a vertex, it provides zero marginal utility to the agent corresponding to that vertex. In this setting, it is natural to ask whether EFX {\em orientations} exist, i.~e.\ whether it is possible to allocate each edge to one of its endpoints while satisfying the EFX condition.  Unfortunately, it is NP-complete to decide whether EFX orientations exist for instances containing only goods \citep{christodoulou2023fair}.

For any result that holds for goods, one naturally ask what happens if chores are introduced. To this end, \citet{ijcai2024p338} considered the EFX orientation problem in the mixed manna setting involving both goods and chores, and obtained a variety of complexity results for different variants of EFX in this setting. They conclude with a discussion on the special case of instances containing only chores, which can be used to model situations in which tasks are to be allocated to agents, but some tasks may not suitable to an agent because the agent is far away physically. For instance, a food delivery platform should avoid assigning orders to workers who are far away. Similarly, in a rental apartment where each floor has a set of communal chores, one should avoid assigning a chore to a tenant of a different floor.

Two variants of EFX for chores have been defined. Let $\pi$ be an allocation, $\pi_i$ denote the set of chores given to $i$, and $u_i$ denote the utility function of agent $i$. We say that $\pi$ is
\begin{itemize}
	\item EFX\_ if for each pair of agents $i \neq j$ such that $i$ envies $j$ and each $e \in \pi_i$ such that $u_i(e) < 0$, we have $u_i(\pi_i \setminus \{e\}) \geq u_i(\pi_j)$; and
	\item EFX$_0$ if for each pair of agents $i \neq j$ such that $i$ envies $j$ and each $e \in \pi_i$ such that $u_i(e) \leq 0$, we have $u_i(\pi_i \setminus \{e\}) \geq u_i(\pi_j)$.
\end{itemize}
The difference is that while EFX\_ only requires the envy be alleviated when ignoring chores with strictly negative marginal utility, EFX$_0$ requires that the envy be alleviated even if we ignore a chore with zero marginal utility. Clearly, EFX$_0$ is a strictly stronger condition than EFX\_.

\citet{ijcai2024p338} found a polynomial-time algorithm that decides whether an EFX\_ orientation exists, and conjectured that deciding whether an EFX$_0$ orientation exists is NP-complete, similar to the case involving only goods. (This conjecture appears in the arXiv version of their paper \citep{li2024complete} and not the conference version \citep{ijcai2024p338}.)

\subsection{Our Contribution}

We study the complexity of finding EF1 and EFX$_0$ orientations of chores on graphs and multigraphs that possibly contain self-loops. Importantly, we resolve the conjecture made by \citet{ijcai2024p338} by giving a polynomial-time algorithm that decides the existence of an EFX$_0$ orientation of chores and outputs one if it exists. Specifically,
\begin{enumerate}
	\item There exists a $O(|V(G)|+|E(G)|)$-time algorithms that decide whether a graph $G$ of chores has an EF1 orientation and outputs one if it exists (Theorem~\ref{thm:mainContributionThreeB}).
	\item There exists a $O((|V(G)|+|E(G)|)^2)$-time algorithms that decide whether a graph $G$ of chores has an EFX$_0$ orientation and outputs one if it exists (Theorem~\ref{thm:main}).
	\item Deciding whether a multigraph of chores has an EF1 (resp.\ EFX$_0$) orientation is NP-complete (Theorems~\ref{thm:main-multi} and \ref{thm:main-no-self-loop}).
\end{enumerate}

Remarkably, our results for chores stand in contrast with previous results for goods. While EF1 orientations of goods always exist \citep{deligkas2024ef1}, our results imply they can fail to exist for chores. Second, while deciding whether EFX$_0$ orientations of goods exist is NP-complete \citep{christodoulou2023fair}, out results imply the same problem is polynomial-time decidable for chores.

\subsection{Related Work}
While EFX allocations always exist \citep{christodoulou2023fair}, deciding whether EFX orientations exist is NP-complete \citep{christodoulou2023fair}, even for graphs with a vertex cover of size 8 \citep{deligkas2024ef1}. EFX orientability has also been shown to be related to the chromatic number $\chi(G)$ of $G$ --- while they always exist if $\chi(G) \leq 2$, they can fail to exist if $\chi(G) \geq 3$ \citep{zeng2024structure}.

The multigraph setting has also been studied. While deciding whether EFX orientations exist is NP-complete for multigraphs with only 10 vertices \citep{deligkas2024ef1}, some positive results have also been obtained. A multigraph is {\em bivalued} if there exists two values $\alpha$ and $\beta$ such that $u_i(e) \in \{\alpha, \beta\}$ for each vertex $i$ and each edge $e$. A multigraph is {\em symmetric} if for each edge $e = \{i, j\}$, we have $u_i(e) = u_j(e)$. EFX orientations exist for bivalued symmetric multigraphs not containing a structure called a non-trivial odd multitree \citep{hsu2024efx}. EFX allocations of a multigraph $G$ exist if $G$ is bipartite \citep{afshinmehr2024efx,bhaskar2024efx,sgouritsa2025existence}, a multicycle \citep{afshinmehr2024efx}, a multitree \citep{bhaskar2024efx}, if $G$ has girth $2 \chi(G) -1$ \citep{bhaskar2024efx}, if each vertex has at most $\lceil |V(G)|/4 \rceil -1$ neighbours \citep{sgouritsa2025existence}, or if the shortest cycle with non-parallel edges has length at least 6 \citep{sgouritsa2025existence}.

\subsection{Organization}

We introduce the model in Section~\ref{section:prelim}. We give our algorithms for finding EF1 and EFX$_0$ orientations of graphs in Sections~\ref{section:EF1} and \ref{section:graphs}. We prove our hardness results (Theorem~\ref{thm:main-multi} and Theorem~\ref{thm:main-no-self-loop}) in Section~\ref{section:multigraphs}. We discuss the special case of additive utility functions in Section~\ref{section:additive}. Finally, we conclude the paper in Section~\ref{section:conclusion}.

\section{Preliminaries}\label{section:prelim}

\subsection{The Fair Orientation Problem}

An instance of the fair orientation problem is an ordered pair $(G, u)$ where $G = (V(G), E(G))$ is a graph and $u$ is a collection of $|V(G)|$ utility functions $u_i: \mathcal{P}(E(G)) \to \mathbb{R}_{\leq 0}$. We allow for self-loops. Each vertex (resp.\ edge) corresponds to an agent (resp.\ chore). For each agent $i$, the function $u_i$ represents its preferences by assigning a real number to each subset of the edges. In the case of a single edge $e$, we write $u_i(e)$ in lieu of $u_i(\{e\})$ for brevity. We assume that whenever a vertex $i$ is not incident to an edge $e$, the edge $e$ has zero marginal utility to $i$, i.~e.\ $u_i(S \cup \{e\}) = u_i(S)$ for any set $S \subseteq E(G)$. In addition, we make the following two minimal assumptions on each utility function $u_i$:
\begin{itemize}
	\item $u_i$ is {\em monotone}, i.~e., $u_i(S) \leq u_i(T)$ whenever $T \subseteq S$.
	\item If $u_i(e) = 0$ for each $e \in S$, then $u_i(S) = 0$.
\end{itemize}

An {\em allocation} $\pi = (\pi_1, \pi_2, \dots, \pi_{|V(G)|})$ is an ordered collection of $|V(G)|$ pairwise disjoint subsets of $E(G)$ whose union is exactly $E(G)$. For each $i$, the subset $\pi_i$ represents the set of edges assigned to $i$. An allocation is called an {\em orientation} if there is a graph orientation $O$ of $G$ such that $e \in \pi_i$ if and only if $e$ is directed toward $i$ in $O$.

\begin{definition}[EF1]
	An allocation (resp.\ orientation) $\pi$ is {\em EF1} if for each pair of agents $i \neq j$, there exists an edge $e \in \pi_i$ such that $u_i(\pi_i \setminus \{e\}) \geq u_i(\pi_j)$.
\end{definition}

We say an agent $i$ {\em strongly envies} an agent $j$ if there exists an edge $e \in \pi_i$ such that $u_i(\pi_i \setminus \{e\}) < u_i(\pi_j)$.

\begin{definition}[EFX$_0$]
	An allocation (resp.\ orientation) $\pi$ is {\em EFX$_0$} if no agent strongly envies another.
\end{definition}

Given $(G, u)$, \textsc{EFX$_0$-Orientation} asks whether $G$ has an EFX$_0$ orientation. We are particularly interested in instances in which every edge $e = \{i, j\}$ is {\em objective}, i.~e.\ $u_i(e) = 0$ if and only if $u_j(e) = 0$. Equivalently, either $e$ has zero marginal utility for both endpoints or $e$ has negative marginal utility for both. If every edge is objective, we call $(G, u)$ an {\em objective} instance of \textsc{EFX$_0$-Orientation}. We use \textsc{EFX$_0$-Orientation-Objective} to denote the \textsc{EFX$_0$-Orientation} problem restricted to objective instances.

When considering objective instances, we refer to edges with zero (resp.\ negative) marginal utility for both endpoints as {\em dummy edges} (resp.\ {\em negative edges}). Moreover, we define a {\em negative component} of $G$ to be a maximal vertex-induced subgraph $K$ such that for any two vertices $i, j$ of $K$, there is a path in $K$ between $i$ and $j$ that only contains negative edges. The negative components of $G$ form a partition of $V(G)$. See Figure~\ref{fig:negative-component} for an example.

\begin{figure}
	\centering
	\begin{tikzpicture}
		\node[circlevertex] (a1) at (0,0) {};
		\node[circlevertex] (a2) at (1,0) {};
		\node[circlevertex] (a3) at (2,0) {};
		\node[circlevertex] (a4) at (1,1) {};
		\node[emptyvertex] (b1) at (2,1) {};
		\node[emptyvertex] (b2) at (3,1) {};
		\node[emptyvertex] (b3) at (3,0) {};
		\node[emptyvertex] (b4) at (4,0) {};
		\draw (a1) to (a2) (a2) to (a3) (a2) to (a4)
		(b1) to (b2) (b2) to (b3) (b3) to (b1) (b3) to (b4);
		
		\draw[dashed] (a1) to (a4) (a3) to (b2) (b2) to (b4) (a4) to (b1) (b1) to (a3);
	\end{tikzpicture}
	\caption{The graph of an objective instance. Solid (resp.\ dashed) edges denote negative (resp.\ dummy) edges. The set of black (resp.\ white) vertices induces one negative component.}
	\label{fig:negative-component}
\end{figure}

\subsection{Related Decision Problems}

Let $H$ be a graph, $P = \{P_1, P_2, \dots, P_k\}$ be a set of pairwise disjoint subsets of $V(H)$, and $D \subseteq V(H)$ be a subset of vertices. The set $P$ is not necessarily a partition of $V(H)$ because not every vertex needs to be in $P$. We say a subset $C \subseteq V(H)$ is a {\em $(P, D)$-vertex cover} of $H$ if $C$ is a vertex cover of $H$ that contains at most one vertex from each $P_\ell \in P$ and no vertex in $D$, i.~e.\ the following conditions hold:
\begin{itemize}
	\item Each edge of $H$ is incident to at least one vertex in $C$;
	\item $|C \cap P_\ell| \leq 1$ for each $P_\ell \in P$; and
	\item $C \cap D = \emptyset$.
\end{itemize}

\begin{definition}[\textsc{PD-Vertex-Cover}]
	Given a tuple $(H, P, D)$ where $H$ is a graph, $P = \{P_1, P_2, \dots, P_k\}$ is a set of pairwise disjoint subsets of $V(H)$, and $D \subseteq V(H)$, \textsc{PD-Vertex-Cover} asks whether there exists a $(P,D)$-vertex cover $C$ of $H$.
\end{definition}

\begin{definition}[\textsc{2SAT}]
	The \textsc{2SAT} problem is a special case of \textsc{SAT} in which each clause consists of one or two literals.
\end{definition}

\begin{definition}[\textsc{Partition}]
	Given a set $S = \{s_1, s_2, \dots, s_k\}$ of natural numbers, \textsc{Partition} asks whether there exists a partition $S = S_a \cup S_b$ such that $\sum_{x \in S_a} x = \sum_{x \in S_b} x$. (Such a partition is called an {\em equipartition}.)
\end{definition}

\section{EF1 Orientations of Graphs}\label{section:EF1}

We give our $O(|V(G)| + |E(G)|)$-time algorithm for finding an EF1 orientation of a graph $G$ if one exists.

\begin{proposition}\label{prop:EF1-graphs}
	An orientation $\pi$ of a graph $G$ is EF1 if and only if each vertex $i$ receives at most one edge of negative utility to it.
\end{proposition}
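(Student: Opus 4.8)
The plan is to prove both directions directly from the definition of EF1, exploiting the fact that in an orientation each edge goes to exactly one of its two endpoints, and that an edge has nonzero (hence negative) marginal utility to a vertex only if it is incident to that vertex.

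\textbf{Forward direction (EF1 $\Rightarrow$ at most one negative edge per vertex).} I would argue by contraposition. Suppose some vertex $i$ receives at least two edges of negative utility to it, say $e$ and $f$ with $u_i(e) < 0$ and $u_i(f) < 0$, both lying in $\pi_i$. Consider any other agent $j$; I want to show $i$ strongly envies $j$, contradicting EF1. Actually, the cleanest target is to show EF1 fails, which only requires \emph{some} pair of agents to witness strong envy, so I should pick $j$ to be an endpoint of $e$ other than $i$ (this exists as long as $e$ is not a self-loop at $i$; I need to handle the self-loop case separately — see below). Since $e \in \pi_i$, the edge $e$ is oriented toward $i$, so $e \notin \pi_j$. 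For any edge $g \in \pi_i$, I must show $u_i(\pi_i \setminus \{g\}) < u_i(\pi_j)$. Now $\pi_i \setminus \{g\}$ still contains at least one of $e, f$ (since removing a single edge cannot remove both), so by monotonicity $u_i(\pi_i \setminus \{g\}) \le u_i(\{\text{that remaining negative edge}\}) < 0$. On the other hand, $\pi_j$ contains no edge incident to $i$ other than possibly edges of the form $\{i,j\}$ — but wait, $\pi_j$ could contain a parallel edge $\{i,j\}$ with negative utility to $i$. Here is where the hypothesis that $G$ is a graph (not a multigraph) matters, combined with self-loops: $\pi_j$ can contain at most one edge incident to $i$, namely the edge $\{i,j\}$ if it exists, but that edge is not $e$ (since $e \in \pi_i$), so either $\{i,j\} \neq e$ is a distinct edge, or $G$ is simple and $\pi_j$ contains \emph{no} edge incident to $i$. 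I need to be careful: if the only edge between $i$ and $j$ is $e$ itself, then $\pi_j$ contains no edge incident to $i$, so by the second assumption on utilities $u_i(\pi_j) = 0 > u_i(\pi_i \setminus \{g\})$, giving strong envy. The self-loop subtlety: if both negative edges incident to $i$ are self-loops, there is no neighbour $j$; but then I can take $j$ to be \emph{any} other vertex (one exists if $|V(G)| \ge 2$; if $|V(G)| = 1$ the statement is vacuous or trivial), and $\pi_j$ contains no edge incident to $i$ at all, so again $u_i(\pi_j) = 0$ while $u_i(\pi_i \setminus \{g\}) < 0$ for every $g$, so $i$ strongly envies $j$.

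\textbf{Reverse direction (at most one negative edge per vertex $\Rightarrow$ EF1).} Fix a pair $i \neq j$. If $\pi_i$ contains no edge of negative utility to $i$, then by the second assumption $u_i(\pi_i) = 0 \ge u_i(\pi_j)$ (utilities are nonpositive), so EF1 holds trivially (remove any edge, or note $u_i(\pi_i \setminus \{e\}) \ge u_i(\pi_i) = 0 \ge u_i(\pi_j)$ for any $e \in \pi_i$; if $\pi_i = \emptyset$ then $u_i(\pi_i) = 0 \ge u_i(\pi_j)$ and EF1 is satisfied vacuously or by the convention that an empty bundle cannot witness envy — I should check the paper's convention, but the standard reading is that if $\pi_i = \emptyset$, agent $i$ does not envy anyone since $u_i(\emptyset) = 0$ is the maximum possible value). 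If $\pi_i$ contains exactly one edge $e$ of negative utility to $i$, let $e$ be the edge to remove. Then every edge in $\pi_i \setminus \{e\}$ has zero marginal utility to $i$, so by the second assumption $u_i(\pi_i \setminus \{e\}) = 0 \ge u_i(\pi_j)$. Hence EF1 holds for the pair $(i,j)$. Since this holds for all pairs, $\pi$ is EF1.

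\textbf{Main obstacle.} The routine calculations are immediate from monotonicity and the ``zero on zero-valued sets'' assumption; the only real care needed is in the forward direction around the edge cases — self-loops (so that the ``other endpoint'' $j$ may not exist), and making sure that the witness agent $j$'s bundle $\pi_j$ really has utility $0$ to $i$, which relies crucially on $G$ being a \emph{simple} graph so that at most one edge joins $i$ and $j$. This is exactly the point that fails for multigraphs, consistent with the paper's later hardness results. I would state these edge cases explicitly rather than sweeping them into ``clearly''.
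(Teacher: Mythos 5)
Your proof is correct and follows essentially the same argument as the paper: in one direction, two negative edges at $i$ force $u_i(\pi_i \setminus \{g\}) < 0 = u_i(\pi_j)$ for every removable $g$ (using simplicity of $G$ so that $\pi_j$ contains no edge incident to $i$), and in the other, removing the unique negative edge (or noting $u_i(\pi_i)=0$) alleviates all envy. Your explicit treatment of self-loops, the empty-bundle convention, and the role of simplicity only spells out details the paper's shorter proof leaves implicit.
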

\begin{proof}
	Suppose $i$ receives two edges $e = \{i, j\}$ and $e' \neq e$ of negative utility to it. Assume $e'$ is a worst chore in $\pi_i$. Since $e \in \pi_i \setminus \{e'\}$, we have $u_i(\pi_i \setminus \{e'\}) \leq u_i(e) < 0 = u_i(\pi_{j})$, so $i$ envies $j$ even after ignoring $e'$ and $\pi$ is not EF1. Otherwise, $i$ receives at most one edge of negative utility to it. Ignoring such an edge alleviates any envy $i$ has, so $\pi$ is EF1.
\end{proof}

\begin{proposition}\label{prop:EF1-augment}
	If $\pi$ is an EF1 orientation of a graph $G$, then introducing a new edge $e = \{i, j\}$ such that $u_i(e) = 0$ and orienting it toward $i$ results in another EF1 orientation. \qed
\end{proposition}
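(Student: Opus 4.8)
The statement to prove is Proposition~\ref{prop:EF1-augment}: starting from an EF1 orientation $\pi$ of $G$, add a new edge $e=\{i,j\}$ with $u_i(e)=0$, orient it toward $i$, and the result is still EF1.

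My plan is to invoke Proposition~\ref{prop:EF1-graphs}, which characterizes EF1 orientations of graphs as exactly those in which every vertex receives at most one edge of negative utility to it. Let $\pi'$ be the new orientation on $G' = G + e$, so $\pi'_i = \pi_i \cup \{e\}$ and $\pi'_k = \pi_k$ for all $k \neq i$. I need to check the characterization holds for $\pi'$.

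First I would note that for every vertex $k \neq i$, the bundle is unchanged, so $k$ still receives at most one edge of negative utility. For vertex $i$ itself, the only change is that $e$ is added to its bundle; since $u_i(e) = 0$, the number of edges of negative utility to $i$ in $\pi'_i$ equals that in $\pi_i$, which is at most one because $\pi$ was EF1. Hence by Proposition~\ref{prop:EF1-graphs}, $\pi'$ is EF1.

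There is essentially no obstacle here — the result is an immediate corollary of the characterization in Proposition~\ref{prop:EF1-graphs}, and the proof is a one-line case check. The only subtlety worth a sentence is confirming that adding a zero-utility edge to a bundle does not increase the count of negatively-valued edges in it, which is immediate from $u_i(e) = 0$. I would write it compactly.

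\begin{proof}
By Proposition~\ref{prop:EF1-graphs}, it suffices to check that in the new orientation every vertex receives at most one edge of negative utility to it. Every vertex other than $i$ keeps the same bundle as in $\pi$, hence still receives at most one such edge. The bundle of $i$ gains only the edge $e$, and since $u_i(e) = 0$, the number of edges of negative utility to $i$ is unchanged and thus remains at most one. Hence the new orientation is EF1 by Proposition~\ref{prop:EF1-graphs}.
\end{proof}
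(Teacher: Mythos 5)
Your proof is correct. Note that the paper itself states Proposition~\ref{prop:EF1-augment} without any proof (it is marked as immediate), so there is no written argument to compare against; your derivation via the characterization in Proposition~\ref{prop:EF1-graphs} is a perfectly good way to make the claim rigorous, and it is arguably the \emph{right} way. A naive direct verification of the EF1 definition would run into a small snag under the paper's general monotone utilities: monotonicity only gives $u_i\bigl((\pi_i \setminus \{f\}) \cup \{e\}\bigr) \leq u_i(\pi_i \setminus \{f\})$, which is the wrong direction, and the assumption $u_i(e)=0$ alone does not guarantee that adding $e$ leaves the utility of a larger bundle unchanged (that only follows for bundles consisting entirely of zero-utility edges, via the paper's second assumption on $u_i$). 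Routing through Proposition~\ref{prop:EF1-graphs}, which reduces EF1 to counting negative edges per vertex, sidesteps this entirely, exactly as you did. The only implicit hypothesis worth a clause is that the augmented graph is still a simple graph (no parallel edge is created), since Proposition~\ref{prop:EF1-graphs} is stated for graphs; this is how the proposition is used in Theorem~\ref{thm:mainContributionThreeB}, so your argument matches the intended setting.
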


It is straightforward to decide whether $G$ has an EF1 orientation and compute one if it exists using these two propositions.

\begin{theorem}{theorem}\label{thm:mainContributionThreeB}
	There exists a $O(|V(G)|+|E(G)|)$-time algorithm that decides whether a graph $G$ of chores has an EF1 orientation and outputs one if it exists.
\end{theorem}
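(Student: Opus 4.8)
The plan is to combine Proposition~\ref{prop:EF1-graphs} with Proposition~\ref{prop:EF1-augment} to reduce the existence question to a purely combinatorial condition on $G$ that can be checked in linear time. By Proposition~\ref{prop:EF1-graphs}, an orientation $\pi$ is EF1 if and only if every vertex is assigned at most one edge that has negative utility to it. Since edges with zero marginal utility to both endpoints (and, more generally, any edge $e = \{i,j\}$ with $u_i(e) = 0$) can be oriented freely toward a $0$-utility endpoint by Proposition~\ref{prop:EF1-augment} without affecting EF1-ness, the only constraints come from edges that are negative to \emph{both} endpoints; equivalently, after discarding every edge that has a zero-utility endpoint, we are left with a subgraph $G'$ on which an EF1 orientation must assign each vertex at most one incoming edge. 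An orientation of $G'$ with in-degree at most $1$ everywhere exists if and only if every connected component of $G'$ has at most as many edges as vertices, i.e.\ every component is either a tree or has exactly one cycle (a ``pseudoforest''); self-loops count as a cycle and simply saturate their vertex. This is the classical characterization of orientability with bounded in-degree, provable directly by a spanning-tree argument per component.

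Concretely, the algorithm I would present is: (1) scan all edges, marking each edge $e = \{i,j\}$ as ``free'' if $u_i(e) = 0$ or $u_j(e) = 0$, and ``forced'' otherwise; (2) on the forced subgraph $G'$, run a depth-first or breadth-first search to find its connected components and, for each, count vertices and edges; if some component has more edges than vertices, report that no EF1 orientation exists; (3) otherwise, for each component of $G'$ build an orientation with in-degree at most $1$ --- root a spanning tree and orient every tree edge away from the root, then orient the at most one remaining (non-tree) edge or self-loop arbitrarily toward one of its endpoints, which does not violate the in-degree bound because that endpoint had in-degree $0$ from the tree as a leaf-directed root, more carefully: orient each tree edge toward the endpoint farther from the root so every non-root vertex gets exactly one incoming tree edge and the root gets none, then give the extra edge to the root (or, for a self-loop, to its vertex if it had no other forced incoming edge --- which it does not, as it is then its own component or the extra edge); (4) orient every free edge toward a zero-utility endpoint. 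Finally output the resulting orientation. Correctness of ``yes'' instances follows from Proposition~\ref{prop:EF1-graphs} since each vertex receives at most one negative-to-it edge, and correctness of ``no'' instances follows because any EF1 orientation restricted to $G'$ would be an in-degree-$\leq 1$ orientation of a component with more edges than vertices, a contradiction by counting.

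For the running time: step~(1) is $O(|E(G)|)$; the graph search in step~(2) is $O(|V(G)| + |E(G)|)$; building the spanning-tree orientations in step~(3) reuses the search trees and is $O(|V(G)| + |E(G)|)$; step~(4) is $O(|E(G)|)$. Hence the total is $O(|V(G)| + |E(G)|)$ as claimed. I would remark that isolated vertices and vertices touched only by free edges need no special handling, and that a self-loop at $i$ that is free is oriented toward $i$ with no cost, whereas a forced self-loop at $i$ forms (together with $i$) a component with one vertex and one edge, which is allowed and simply uses up $i$'s single negative slot.

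The main obstacle, and the only place requiring genuine care rather than bookkeeping, is the argument that the pseudoforest condition on $G'$ is \emph{exactly} the right characterization --- in particular verifying both directions cleanly in the presence of self-loops and of vertices incident to a mix of free and forced edges, and making sure the explicit orientation produced in step~(3) really does give in-degree at most $1$ at every vertex of every component (the root/leaf direction of tree edges plus the single extra edge). Everything else --- the reduction away from free edges via Proposition~\ref{prop:EF1-augment}, and the per-vertex EF1 check via Proposition~\ref{prop:EF1-graphs} --- is immediate from the already-established propositions.
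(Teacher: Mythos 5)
Your proposal is correct and follows essentially the same route as the paper: reduce via Proposition~\ref{prop:EF1-augment} to the subgraph of edges negative to both endpoints, characterize EF1 via Proposition~\ref{prop:EF1-graphs} as in-degree at most one, check the per-component edge-count condition, and construct the orientation with a linear-time graph search. The only nit is in step~(3): to ``give the extra edge to the root'' you must root the spanning tree at an endpoint of that extra non-tree edge (or equivalently orient the unique cycle cyclically and the remaining tree edges away from it), which is an easy fix and matches the paper's equally terse BFS-based construction.
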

\begin{proof}
	First, partition the edge set $E(G)$ of $G$ into $E_0 \cup E_1$, where $E_0$ is the set of edges that have zero utility to at least one endpoint, and $E_1$ is the set of edges that have strictly negative utility to both endpoints. Proposition~\ref{prop:EF1-augment} allows us to ignore the edges in $E_0$ without loss of generality. Hence, every edge has strictly negative utility to both endpoints. In this case,  Proposition~\ref{prop:EF1-graphs} implies that an orientation $\pi$ of $G$ is EF1 if and only if each vertex $i$ receives at most one edge. Clearly, this is the case if and only if each component $K$ of $G$ contains at most $|V(K)|$ edges.
	
	To construct an EF1 orientation in the case that it exists, orient each edge $e \in E_0$ toward an endpoint $i$ such that $u_i(e) = 0$. Then, using BFS, find the components of the subgraph of $G$ induced by the edges of $E_1$, and orient the edges of $E_1$ in a way such that each vertex of $G$ receives at most one.
	
	Ignoring the set $E_0$ of edges that have zero utility to at least one endpoint takes $O(|E(G)|)$ time by checking every edge. Verifying the number of edges in each component of $G$ takes $O(|V(G)| + |E(G)|)$ time using BFS. Constructing the EF1 orientation also takes $O(|V(G)| + |E(G)|)$ time because it can be done using BFS.
\end{proof}

\section{EFX$_0$ Orientations of Graphs}\label{section:graphs}

In this section, we give our $O((|V(G)|+|E(G)|)^2)$-time algorithm \textsc{FindEFXOrientation} (Algorithm~\ref{alg:FindEFXOrientation}) for \textsc{EFX$_0$-Orientation}. Our algorithm is essentially a reduction from \textsc{EFX$_0$-Orientation} to \textsc{2SAT}. It comprises three nested routines, each functioning as a reduction from one decision problem to another. See Figure~\ref{fig:schematic} for a visual overview.

We present the proofs of correctness of the algorithms in their nested order, starting with innermost one, \textsc{FindPDVertexCover} (Algorithm~\ref{alg:FindPDVertexCover}).

\begin{figure}
	\centering
	\begin{tikzpicture}
		\node[sqvertex] (P1) at (0,3*1.3) {\textsc{EFX$_0$-Orientation}};
		\node[sqvertex] (P2) at (0,2*1.3) {\textsc{EFX$_0$-Orientation-Objective}};
		\node[sqvertex] (P3) at (0,1*1.3) {\textsc{PD-Vertex-Cover}};
		\node[sqvertex] (P4) at (0,0) {\textsc{2SAT}};
		
		\draw[arc] (P1) to node[right] {\textsc{FindEFXOrientation} (Algorithm~\ref{alg:FindEFXOrientation})} (P2);
		
		\draw[arc] (P2) to node[right] {\textsc{FindEFXOrientationObj} (Algorithm~\ref{alg:FindEFXOrientObj})} (P3);
		
		\draw[arc] (P3) to node[right] {\textsc{FindPDVertexCover} (Algorithm~\ref{alg:FindPDVertexCover})} (P4);
	\end{tikzpicture}
	\caption{The reductions our algorithms represent. Each node on the left is a decision problem. A directed edge between nodes is a reduction and is labelled with the relevant algorithm. \vspace{\baselineskip}}
	\label{fig:schematic}
\end{figure}

\subsection{\textsc{FindPDVertexCover} (Algorithm~\ref{alg:FindPDVertexCover})}

\textsc{FindPDVertexCover} (Algorithm~\ref{alg:FindPDVertexCover}) accepts an instance $(H, P, D)$ of \textsc{PD-Vertex-Cover} as input, and outputs a $(P, D)$-vertex cover of $H$ if one exists and \texttt{false} otherwise.

Recall that in an instance $(H, P, D)$ of \textsc{PD-Vertex-Cover}, $H$ is a graph, $P = \{P_1, P_2, \dots, P_k\}$ is a set of pairwise disjoint subsets of $V(H)$, and $D \subseteq V(H)$. The algorithm starts by constructing an instance $\phi$ of \textsc{2SAT} on Line~\ref{line:1.create-sat}, which contains a Boolean variable $x_i$ for each vertex $i$ of $H$ and three types of clauses:
\begin{itemize}
	\item (type 1) the clause $\{x_i, x_j\}$ for each edge $e = \{i, j\}$ of $H$ (possibly $i=j$),
	\item (type 2) the clause $\{\overline{x_i}, \overline{x_j}\}$ for each set $P_\ell \in P$ and each pair of distinct vertices $i, j \in P_\ell$, and
	\item (type 3) the clause $\{\overline{x_i}\}$ for each vertex $i \in D$.
\end{itemize}
Then, the algorithm determines whether $\phi$ has a satisfying assignment $f$ on Line~\ref{line:1.call-sat}. If so, the algorithm outputs the $(P,D)$-vertex cover $C \coloneqq \{i \in V(H) \mid f(x_i) = \texttt{true}\}$. Otherwise, the algorithm outputs \texttt{false}.

\begin{algorithm}
	\caption{$\textsc{FindPDVertexCover}(H,P,D)$}
	\label{alg:FindPDVertexCover}
	
	\hspace*{\algorithmicindent} \textbf{Input:} An instance $(H, P, D)$ of \textsc{PD-Vertex-Cover}.
	
	\hspace*{\algorithmicindent} \textbf{Output:} A $(P,D)$-vertex cover $C$ of $H$ if one exists and \texttt{false} otherwise.
	\begin{algorithmic}[1] 
		
		
		
		
		
		
		
		\State $\phi \gets$ the 2SAT instance defined above \label{line:1.create-sat}
		\State $f \gets \textsc{2SAT}(\phi)$\label{line:1.call-sat}
		
		\If{$f = \texttt{false}$} \Comment{$\phi$ has no satisfying assignment}
		\State \Return \texttt{false}\label{line:1.return-false}
		\EndIf

		\State \Return $C \coloneqq \{i \in V(H) \mid f(x_i) = \texttt{true}\}$\label{line:1.return-C}
	\end{algorithmic}
\end{algorithm}

\begin{lemma}\label{lemma:reduc:FindPDVertexCover}
	The following statements hold for \textsc{FindPDVertexCover} (Algorithm~\ref{alg:FindPDVertexCover}):
	\begin{enumerate}
		\item If $H$ has a $(P,D)$-vertex cover, then $\phi$ has a satisfying assignment.
		
		\item If $\phi$ has a satisfying assignment $f$, then $C \coloneqq \{i \in H \mid f(x_i) = \normalfont{\texttt{true}}\}$ is a $(P,D)$-vertex cover of $H$. Otherwise, \textsc{FindPDVertexCover} outputs \texttt{false}.
		
		\item \textsc{FindPDVertexCover} runs in $O(|V(H)|^2)$ time.
	\end{enumerate}
\end{lemma}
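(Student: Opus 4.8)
The plan is to verify the three claims of Lemma~\ref{lemma:reduc:FindPDVertexCover} by establishing a tight correspondence between $(P,D)$-vertex covers of $H$ and satisfying assignments of $\phi$, and then bounding the running time. The guiding idea is that the indicator of membership in the cover is exactly the truth assignment: a vertex $i$ is "in $C$" iff $x_i$ is \texttt{true}.

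\smallskip

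\textbf{Claim 1 (completeness).} Suppose $C$ is a $(P,D)$-vertex cover of $H$. Define $f(x_i) = \texttt{true}$ iff $i \in C$. I would check that $f$ satisfies each of the three clause types. For a type-1 clause $\{x_i, x_j\}$ coming from an edge $e = \{i,j\}$: since $C$ is a vertex cover, at least one of $i, j$ lies in $C$, so at least one of $x_i, x_j$ is \texttt{true}; the self-loop case $i = j$ forces $i \in C$ and the clause $\{x_i\}$ is satisfied. For a type-2 clause $\{\overline{x_i}, \overline{x_j}\}$ with $i, j \in P_\ell$ distinct: since $|C \cap P_\ell| \le 1$, not both $i$ and $j$ are in $C$, so at least one of $\overline{x_i}, \overline{x_j}$ is \texttt{true}. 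For a type-3 clause $\{\overline{x_i}\}$ with $i \in D$: since $C \cap D = \emptyset$, $i \notin C$, so $\overline{x_i}$ is \texttt{true}. Hence $\phi$ is satisfiable.

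\smallskip

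\textbf{Claim 2 (soundness).} Conversely, suppose $f$ satisfies $\phi$, and set $C = \{i \mid f(x_i) = \texttt{true}\}$. Running the three clause types in reverse: type-1 clauses being satisfied means every edge of $H$ has an endpoint in $C$, so $C$ is a vertex cover; type-2 clauses being satisfied means for each $P_\ell$ and each distinct pair $i, j \in P_\ell$ at most one is in $C$, hence $|C \cap P_\ell| \le 1$; type-3 clauses being satisfied means no vertex of $D$ is in $C$, i.e.\ $C \cap D = \emptyset$. So $C$ is a $(P,D)$-vertex cover. The final sentence of Claim 2 — that \textsc{FindPDVertexCover} outputs \texttt{false} when no satisfying assignment exists — is immediate from the pseudocode (Lines~\ref{line:1.call-sat}--\ref{line:1.return-false}), using the correctness of the \textsc{2SAT} subroutine.

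\smallskip

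\textbf{Claim 3 (running time).} The only nonroutine point is bounding the size of $\phi$, since \textsc{2SAT} is solvable in time linear in the instance size (by the classical strongly-connected-components algorithm on the implication graph). There are $|V(H)|$ variables. The type-1 clauses number $|E(H)|$; since the $P_\ell$ are pairwise disjoint, $\sum_\ell |P_\ell| \le |V(H)|$, but a single $P_\ell$ can have up to $|V(H)|$ vertices, giving $\sum_\ell \binom{|P_\ell|}{2} = O(|V(H)|^2)$ type-2 clauses in the worst case; type-3 clauses number at most $|V(H)|$. Constructing $\phi$ and reading off $C$ are linear in this size, so the total is $O(|V(H)|^2 + |E(H)|) = O(|V(H)|^2)$ (a simple graph has $|E(H)| = O(|V(H)|^2)$). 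The main obstacle, such as it is, is just being careful that the quadratic term genuinely comes only from type-2 clauses and that it is unavoidable given the way $P$ is encoded; everything else is bookkeeping.
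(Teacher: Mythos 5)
Your proposal is correct and follows essentially the same route as the paper's proof: the bijective correspondence between membership in the cover and truth of the variables, checked clause type by clause type in both directions, plus the same $O(|V(H)|^2)$ size bound on $\phi$ combined with the linear-time \textsc{2SAT} algorithm. Your accounting of the clause counts (in particular that the quadratic term comes from the type-2 clauses) is if anything slightly more explicit than the paper's.
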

\begin{proof}
	(1): Suppose $H$ has a $(P, D)$-vertex cover $C$. We claim the truth assignment $f$ in which $f(x_i) = \texttt{true}$ if and only if $i \in C$ to be a satisfying assignment for $\phi$. Since $C$ is a $(P, D)$-vertex cover of $H$, each edge $e$ of $H$ is incident to at least one vertex in $C$. In other words, for each edge $e = \{i, j\}$ of $H$, we have $i \in C$ or $j \in C$. Hence, at least one of $x_i, x_j$ is true under $f$, so the type 1 clause $\{x_i, x_j\}$ is satisfied.
	
	On the other hand, the definition of $(P, D)$-vertex covers ensures $|C \cap P_\ell| \leq 1$ for each $P_\ell \in P$. Hence, each $P_\ell$ contains at most one vertex $i$ such that $f(x_i) = \texttt{true}$, so all type 2 clauses are satisfied.
	
	Finally, because $C$ is a $(P, D)$-vertex cover, we have $C \cap D = \emptyset$. Hence, for each $i \in D$, we have $i \notin C$, so $f(x_i) = \texttt{false}$, ensuring all type 3 clauses are satisfied.
	
	(2): Suppose $\phi$ has a satisfying assignment $f$. We claim that $C \coloneqq \{i \in V(H) \mid f(x_i) = \texttt{true}\}$ is a $(P, D)$-vertex cover of $H$. Since the type 1 clause $\{x_i, x_j\}$ is satisfied under $f$, the edge $e = \{i, j\}$ of $H$ is incident to at least one vertex of $C$, namely, $i$ or $j$. Since the type 2 clauses are satisfied, for each fixed $P_\ell \in P$, there is at most one vertex $i \in P_\ell$ such that $x_i$ is true under $f$. Hence, $|C \cap P_\ell| \leq 1$. Finally, the type 3 clauses ensure that $f(x_i) = \texttt{false}$ for each vertex $i \in D$, so $i \notin C$. Thus, $C \cap D = \emptyset$.
	
	Otherwise, $\phi$ has no satisfying assignment and \textsc{FindPDVertexCover} outputs \texttt{false} on Line~\ref{line:1.return-false}.
	
	(3): Constructing $\phi$ on Line~\ref{line:1.create-sat} takes $O(|V(H)|^2)$ time because $\phi$ contains $|V(H)|$ variables and $O(|V(H)|^2)$ clauses. Verifying whether $\phi$ has a satisfying assignment on Line~\ref{line:1.call-sat} takes time linear in the number of variables and clauses using the algorithm due to \citet{aspvall1979linear}, which is $O(|V(H)|^2)$. Finally, constructing $C$ takes $O(|V(H)|)$ time by checking each vertex of $H$. Thus, \textsc{FindPDVertexCover} runs in $O(|V(H)|^2)$ time.
\end{proof}

\subsection{\textsc{FindEFXOrientObj} (Algorithm~\ref{alg:FindEFXOrientObj})}

\textsc{FindEFXOrientObj} (Algorithm~\ref{alg:FindEFXOrientObj}) takes as input an instance $(G^o, u^o)$ of \textsc{EFX$_0$-Orientation-Objective} and outputs an EFX$_0$ orientation of $G^o$ if one exists and \texttt{false} otherwise. Recall that in an objective instance, all edges are objective, and we distinguish dummy edges that provide zero marginal utility to both endpoints and negative edges that provide negative marginal utility to both endpoints. Recall also that a negative component of $G^o$ is a maximal vertex-induced subgraph $K$ such that for any two vertices $i, j$ of $K$, there exists a path in $K$ between $i$ and $j$ that only contains negative edges.

We make the following crucial observation.

\begin{proposition}\label{prop:one-neg-or-all-dummies}
	Let $(G^o, u^o)$ be an instance of \textsc{EFX$_0$-Orientation-Objective}. An orientation $\pi^o$ of $G^o$ is EFX$_0$ if and only for each vertex $i$, the orientation $\pi^o$ contains a unique edge directed toward $i$ or every edge directed toward $i$ is a dummy edge.
\end{proposition}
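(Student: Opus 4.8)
The plan is to prove the equivalence by a short case analysis whose only real ingredient is a description of $u^o_i(\pi^o_j)$ for two distinct agents $i \neq j$. First I would record the following. Since $G^o$ has at most one edge between any two distinct vertices, the only edge that can belong to $\pi^o_j$ and be incident to $i$ is the edge $\{i,j\}$, and it belongs to $\pi^o_j$ precisely when it is present in $G^o$ and oriented toward $j$. Using the model assumption that non-incident edges have zero marginal utility, together with the two standing assumptions on $u^o_i$ (monotonicity, and that a set of zero-utility edges has utility $0$), this yields that $u^o_i(\pi^o_j) = 0$ unless $\{i,j\}$ is a negative edge oriented toward $j$, in which case $u^o_i(\pi^o_j) = u^o_i(\{i,j\}) < 0$; in either case $u^o_i(\pi^o_j) \le 0$, and $u^o_i(\emptyset) = 0$.

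For the ``if'' direction, I would assume the stated condition and fix an agent $i$; it suffices to show that $i$ strongly envies no one. If $\pi^o_i$ has at most one edge, then removing its unique edge $e$ (if any) leaves $u^o_i(\pi^o_i \setminus \{e\}) = u^o_i(\emptyset) = 0 \ge u^o_i(\pi^o_j)$ for every $j$, while if $\pi^o_i = \emptyset$ there is no edge to remove; either way $i$ strongly envies no one. If instead every edge of $\pi^o_i$ is a dummy edge, then $\pi^o_i \setminus \{e\}$ is again a set of zero-utility edges incident to $i$ for every $e \in \pi^o_i$, so $u^o_i(\pi^o_i \setminus \{e\}) = 0 \ge u^o_i(\pi^o_j)$; again $i$ strongly envies no one. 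Since $i$ was arbitrary, $\pi^o$ is EFX$_0$.

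For the ``only if'' direction I would prove the contrapositive: suppose some agent $i$ has $|\pi^o_i| \ge 2$ and receives a negative edge, and exhibit a strong envy. Step one: choose $e'' \in \pi^o_i$ whose removal still leaves some negative edge $g \in \pi^o_i$ — this is possible because $\pi^o_i$ contains either two negative edges (remove either one) or exactly one negative edge alongside a dummy edge (remove the dummy) — so by monotonicity $u^o_i(\pi^o_i \setminus \{e''\}) \le u^o_i(\{g\}) < 0$. Step two: find an agent $j \neq i$ with $u^o_i(\pi^o_j) = 0$; taking $j$ to be the endpoint distinct from $i$ of a negative edge in $\pi^o_i$ works, since that edge is oriented toward $i$ and hence does not lie in $\pi^o_j$, so the opening observation gives $u^o_i(\pi^o_j) = 0$. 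Combining the two steps, $u^o_i(\pi^o_i \setminus \{e''\}) < 0 = u^o_i(\pi^o_j)$, so $i$ strongly envies $j$ and $\pi^o$ is not EFX$_0$.

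The step I expect to require the most care is step two of the ``only if'' direction, namely producing a target agent $j$ with $u^o_i(\pi^o_j) = 0$. The argument above is immediate when $i$ receives a negative non-loop edge; the remaining possibility — that every negative edge $i$ receives is a self-loop at $i$ — must be handled separately, by instead selecting some agent $j$ that is not joined to $i$ by a negative edge oriented away from $i$ and running the same comparison. Everything else is routine manipulation using monotonicity and the zero-utility assumption.
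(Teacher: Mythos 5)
Your proof is correct and takes essentially the same approach as the paper's: the "if" direction rests on the observation that $u_i^o(\pi_j^o)\le 0$ for every $j$, and the "only if" direction removes one received edge so that a negative edge remains and compares against a neighbour $j$ whose bundle has zero utility to $i$. Your decoupling of the removed edge from the envy target, together with the explicit treatment of the case where the only negative edge received is a self-loop, is if anything slightly more careful than the paper's argument (which tacitly takes the removed edge to be a non-loop), and the completion you sketch for that case is indeed routine.
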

\begin{proof}
	Let $\pi^o = (\pi_1^o, \pi_2^o, \dots, \pi_{|V(G^o)|}^o)$ be an orientation of $G^o$ and suppose that for each vertex $i$, the orientation $\pi^o$ contains a unique edge directed toward $i$ or every edge directed toward $i$ is a dummy edge. If $\pi^o$ contains a unique edge $e$ directed toward $i$, then we have $u_i^o(\pi^o_i \setminus \{e\}) = u_i^o(\emptyset) = 0 \geq u_i^o(\pi_j^o)$ for any agent $j$. Otherwise, every edge directed toward $i$ is a dummy edge, so $u_i^o(\pi_i^o) = 0 \geq u_i^o(\pi_j^o)$ for any $j$. Hence, no agent strongly envies another and $\pi^o$ is EFX$_0$.
	
	Conversely, suppose the orientation $\pi^o$ contains two edges $e, e'$ directed toward some $i$ and not every edge directed toward $i$ is a dummy edge. Without loss of generality, assume $u_i^o(e') < 0$. Let $j, j'$ denote the other endpoints of the edges $e, e'$, respectively. Note that $j \neq j'$ because $G$ does not contain parallel edges. Also, $u_i^o(\pi_j^o) = 0$ because $i$ receives the edge $e$ between $i$ and $j$. Since $u_i^o(e') < 0$ and $\{e'\} \subseteq \pi_i^o \setminus \{e\}$, the monotonicity of $u_i^o$ implies $u_i^o(\pi_i^o \setminus \{e\}) \leq u_i^o(e') < 0 = u_i^o(\pi_j^o)$. So, $\pi^o$ is not EFX$_0$.
\end{proof}

Proposition~\ref{prop:one-neg-or-all-dummies} gives a graph theoretical condition that is equivalent to the EFX$_0$ condition, which \textsc{FindEFXOrientObj} exploits. The algorithm first finds the set $\mathcal{K}$ of negative components of $G^o$ on Line~\ref{line:2.find-neg-comps}. If some negative component $K$ contains more than $|V(K)|$ negative edges, the algorithm concludes that $G^o$ has no EFX$_0$ orientation and outputs \texttt{false}. Otherwise, every negative component $K$ has at most $|V(K)|$ negative edges, and the algorithm constructs the instance $(H, P, D)$ of \textsc{PD-Vertex-Cover} on Line~\ref{line:2.construct-HPD} defined as:
\begin{itemize}
	\item $H$ is the graph with the same vertex set as $G^o$ and the set of dummy edges of $G^o$;
	\item $P = \{K \in \mathcal{K} \mid K \text{ has exactly } |V(K)|-1 \text{ negative edges}\}$;
	\item $D$ is the set of vertices that belong to a negative component $K \in \mathcal{K}$ that has exactly $|V(K)|$ negative edges.
\end{itemize}

Using \textsc{FindPDVertexCover} (Algorithm~\ref{alg:FindPDVertexCover}) on Line~\ref{line:2.subroutine}, the algorithm tries to find a $(P,D)$-vertex cover $C$ of $H$. If successful, it uses $C$ to construct the orientation $\pi^o$ of $G^o$  defined below and outputs it. Otherwise, it outputs \texttt{false}.

We give the construction of the orientation $\pi^o$ of $G^o$. Because $C$ is a $(P, D)$-vertex cover of $H$, each edge $e$ of $H$ (equivalently, each dummy edge of $G^o$) is incident to a vertex $i \in C$. Orient $e$ toward $i$, with arbitrary tie-breaking when two such $i$ exist. We then orient the negative edges of $G^o$ using Observation~\ref{obs:graph}, which implies $G^o$ has an orientation such that every vertex has in-degree at most one (i.~e.\ each vertex receives at most one negative edge) because every negative component $K$ of $G^o$ has at most $|V(K)|$ negative edges.

\begin{algorithm}
	\caption{$\textsc{FindEFXOrientObj}(G^o,u^o)$}
	\label{alg:FindEFXOrientObj}
	
	\hspace*{\algorithmicindent} \textbf{Input:} An instance $(G^o, u^o)$ of \textsc{EFX$_0$-Orientation-Objective}.
	
	\hspace*{\algorithmicindent} \textbf{Output:}
	An EFX$_0$ orientation $\pi^o$ of $G^o$ if one exists and \texttt{false} otherwise.
	\begin{algorithmic}[1] 
		\State $\mathcal{K} \gets \{K \mid K \text{ is a negative component of } G^o \}$  \label{line:2.find-neg-comps}
		
		
		
		
		\If{$\exists K \in \mathcal{K} \mid K$ contains $>|V(K)|$ negative edges}
		\State \Return \texttt{false} \label{line:2.return-false-1}
		\EndIf
		
		\State $(H, P, D) \gets $ the instance of \textsc{PD-Vertex-Cover} defined above \label{line:2.construct-HPD}
		
		
		
		\State $C \gets \textsc{FindPDVertexCover}(H,P,D)$ \label{line:2.subroutine} \Comment{Algorithm~\ref{alg:FindPDVertexCover}}
		
		\If{$C = \texttt{false}$}
		\State \Return \texttt{false} \label{line:2.return-false-2}
		\EndIf
		
		\State \Return the orientation $\pi^o$ of $G^o$ defined above \label{line:2.construct-pio} \label{line:2.return}
	\end{algorithmic}
\end{algorithm}

\begin{lemma}\label{lemma:reduc:FindEFXOrientObj}
	The following statements hold for \textsc{FindEFXOrientObj} (Algorithm~\ref{alg:FindEFXOrientObj}):
	\begin{enumerate}
		\item If $G^o$ has an EFX$_0$ orientation, then no negative component $K$ of $G^o$ contains $>|V(K)|$ negative edges and $H$ has a $(P,D)$-vertex cover.
		\item If $C$ is a $(P,D)$-vertex cover of $H$, then $\pi^o$ is an EFX$_0$ orientation of $G^o$.
		\item \textsc{FindEFXOrientObj} runs in $O(|V(G^o)|^2)$ time.
	\end{enumerate}
\end{lemma}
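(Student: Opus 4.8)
The plan is to prove the three claims of Lemma~\ref{lemma:reduc:FindEFXOrientObj} in order, leveraging the characterization given by Proposition~\ref{prop:one-neg-or-all-dummies} and the correctness of \textsc{FindPDVertexCover} (Lemma~\ref{lemma:reduc:FindPDVertexCover}), together with the (referenced) Observation that a graph has an orientation with in-degree at most one on every vertex precisely when every component $K$ has at most $|V(K)|$ edges.

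\textbf{Part (1).} Assume $G^o$ has an EFX$_0$ orientation $\pi^o$. First I would argue no negative component $K$ has more than $|V(K)|$ negative edges: by Proposition~\ref{prop:one-neg-or-all-dummies}, each vertex $i$ either receives a unique edge total or receives only dummy edges; in either case $i$ receives at most one negative edge, so the negative edges of $K$ form a subgraph of in-degree $\le 1$, hence $K$ has at most $|V(K)|$ negative edges. Next I would exhibit a $(P,D)$-vertex cover $C$ of $H$. Let $C$ be the set of vertices $i$ that receive at least one negative edge in $\pi^o$ — equivalently, the vertices that do \emph{not} receive a unique edge (combining with the fact that a vertex with a unique incoming negative edge receives nothing else, so "unique edge" vs. "receives a negative edge" neatly partition by Proposition~\ref{prop:one-neg-or-all-dummies}; I will need to handle the corner case of a vertex with in-degree zero, which I can freely place outside $C$). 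I claim $C$ covers every dummy edge: if a dummy edge $e=\{i,j\}$ were oriented toward, say, $i$, and $i\notin C$, then by definition $i$ receives only dummy edges, which is permitted — so covering is \emph{not} automatic, and I must instead define $C$ more carefully. The right definition is: for each dummy edge, its head under $\pi^o$ must lie in $C$, \emph{and} a vertex receiving a negative edge together with anything else violates EFX$_0$. So set $C = \{i : \pi^o_i$ is not contained in the dummy edges$\} \cup \{$heads of dummy edges that receive $\ge 2$ edges$\}$; more cleanly, I will take $C = \{i : i$ receives a negative edge$\}$ and observe that if a dummy edge $e$ is oriented toward $i$ with $i\notin C$, then $i$ receives only dummies, so $i$ had in-degree $\ge 1$ in $H$ but possibly $0$ negative edges — to fix the cover I reorient such a dummy edge's consideration by noting EFX$_0$ forces: if $i$ receives a negative edge it receives nothing else, hence every dummy edge's head is a vertex receiving \emph{only} that dummy edge or is forced into $C$. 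The honest approach: define $C$ = set of vertices receiving $\ge 1$ negative edge; show (a) $C$ covers $H$ because a dummy edge $\{i,j\}$ with \emph{neither} endpoint receiving a negative edge would have to be oriented toward some endpoint which then, being a non-unique recipient only if it has another dummy — here I use that EFX$_0$ plus a dummy edge to $i$ forces $i$'s bundle to be all-dummy, which is fine, so I genuinely need $C$ to also include enough dummy-heads; the clean statement is that $C=\{i:\pi^o_i\not\subseteq\text{dummy edges}\}$ fails to cover, so ultimately I let $C$ be the set of vertices that are heads of at least one negative edge, \emph{augmented} by choosing, for each dummy edge both of whose endpoints receive only dummy edges, one endpoint to add — but adding may violate $|C\cap P_\ell|\le1$. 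This is the crux, so see the obstacle paragraph. Then: $|C\cap K|\le 1$ when $K$ has $|V(K)|-1$ negative edges because then the in-degree-$\le1$ orientation of a tree-like component on $|V(K)|$ vertices with $|V(K)|-1$ edges leaves exactly one vertex with in-degree $0$, i.e. exactly one vertex of $K$ receives no negative edge, so $|C\cap K|=|V(K)|-1\le 1$ forces $|V(K)|\le 2$... that is false in general, so again the definition of $C$ must be the set of vertices receiving \emph{no} negative edge within relevant components — i.e., $C$ should be small, not large. I will therefore define $C$ as the set of vertices that do \emph{not} receive a negative edge, intersected appropriately: a $(P,D)$-vertex cover is a set disjoint from $D$ (vertices in saturated components, which receive exactly one negative edge each, hence none is free) and meeting each $P_\ell$ in $\le 1$ vertex (the unique unsaturated vertex of that component). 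So $C$ = the set of vertices receiving no negative edge in $\pi^o$; it avoids $D$, meets each $P_\ell$ in at most one vertex, and covers every dummy edge because a dummy edge $\{i,j\}$ with both $i,j$ receiving negative edges would force $i$ (say, the head) to receive a negative edge plus a dummy edge, contradicting EFX$_0$. That is the correct argument and I will write it that way.

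\textbf{Part (2).} Assume $C$ is a $(P,D)$-vertex cover of $H$. I would verify the orientation $\pi^o$ constructed by the algorithm satisfies the hypothesis of Proposition~\ref{prop:one-neg-or-all-dummies}. Dummy edges are oriented toward vertices in $C$. Negative edges are oriented, via the Observation, so that each vertex gets in-degree at most one among negative edges, using that every negative component $K$ has $\le|V(K)|$ negative edges (guaranteed since we did not return \texttt{false}). Now take any vertex $i$. If $i\in C$: then $i\notin D$, so $i$ is not in a saturated component, and either $i$'s component $K$ has $\le |V(K)|-1$ negative edges, in which case the in-degree-$\le1$ orientation of $K$'s negative edges can be chosen to leave the (at most one, by $|C\cap K|\le1$) vertex of $C$ in $K$ with in-degree $0$ — here I must make sure the Observation's orientation can be aligned so that a prescribed vertex in each "deficient" component is the sink; a component with $|V(K)|$ vertices and $\le|V(K)|-1$ edges is acyclic-ish and one can root the orientation at any chosen vertex, so yes — or $i$'s component has no negative edges at all. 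Either way $i$ receives no negative edge, so every edge toward $i$ is dummy. If $i\notin C$: then every dummy edge incident to $i$ is oriented away from $i$ (toward its other endpoint in $C$), so $i$ receives only negative edges, and it receives at most one by construction. In both cases the Proposition's condition holds, so $\pi^o$ is EFX$_0$; it is an orientation by construction.

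\textbf{Part (3): running time.} Finding the negative components by BFS/DFS on the negative-edge subgraph is $O(|V(G^o)|+|E(G^o)|)$; checking the edge counts per component is within the same bound. Building $(H,P,D)$ is linear. The call to \textsc{FindPDVertexCover} costs $O(|V(H)|^2)=O(|V(G^o)|^2)$ by Lemma~\ref{lemma:reduc:FindPDVertexCover}(3). Constructing $\pi^o$ — orienting each dummy edge toward a cover vertex, and orienting negative edges via the Observation (another linear-time traversal) — is $O(|V(G^o)|+|E(G^o)|)$. Since $|E(G^o)|=O(|V(G^o)|^2)$ for a simple graph, the total is $O(|V(G^o)|^2)$.

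\textbf{Main obstacle.} The delicate point is Part (1): choosing the \emph{right} witness $C$. The natural first guess (vertices receiving a negative edge) is backwards and breaks the $|C\cap P_\ell|\le1$ condition; the correct choice is the complement — vertices receiving \emph{no} negative edge — and the key insight to make that work is that EFX$_0$ (via Proposition~\ref{prop:one-neg-or-all-dummies}) forbids any vertex from receiving a negative edge together with a dummy edge, which is exactly what forces the complement set to be a vertex cover of the dummy-edge graph $H$. Getting the in-degree-zero/sink alignment right in Part (2) (rooting the Observation's orientation at the designated unsaturated vertex of each deficient component) is a secondary technical point but routine once the structure of deficient components (at most $|V(K)|-1$ negative edges) is in hand.
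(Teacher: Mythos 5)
Your proposal is correct and, in its final form, takes essentially the same route as the paper: reduce EFX$_0$ to the characterization of Proposition~\ref{prop:one-neg-or-all-dummies}, exhibit a $(P,D)$-vertex cover as the witness in Part~(1) via the pigeonhole argument on negative in-degrees, root the tree orientation of each deficient component at its cover vertex in Part~(2), and charge the quadratic cost to \textsc{FindPDVertexCover} in Part~(3). The only differences are cosmetic --- the paper's witness is the set of vertices receiving a dummy edge while yours is the (slightly larger) set of vertices receiving no negative edge, which works by the same argument --- so the main revision needed is to delete the exploratory false starts in Part~(1) and keep only the final definition of $C$ with its three verifications.
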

\begin{proof}
	(1:) Suppose $G^o$ has an EFX$_0$ orientation $\pi^o$. First, we show that no negative component $K$ of $G^o$ contains $>|V(K)|$ negative edges. Fix a negative component $K$. By Proposition~\ref{prop:one-neg-or-all-dummies}, for each vertex $i$, the orientation $\pi^o$ contains a unique edge directed toward $i$ or every edge directed toward $i$ is a dummy edge. In particular, this is true for each vertex in $K$, so $\pi^o$ contains at most $|V(K)|$ negative edges directed toward a vertex of $K$. On the other hand, every negative edge of $K$ is directed toward a vertex of $K$ because $\pi^o$ is an orientation. It follows that $K$ contains at most $|V(K)|$ negative edges.
	
	We claim that the set $C$ of vertices $i$ of $G^o$ such that $\pi^o$ contains a dummy edge directed toward $i$ is a $(P, D)$-vertex cover of $H$. First we show that each edge of $H$ is incident to at least one vertex in $C$. Let $e$ be an edge of $H$. By the definition of $H$, the edge $e$ is a dummy edge of $G^o$. Let $i$ be the endpoint toward which $e$ is directed in the orientation $\pi^o$ of $G^o$. The definition of $C$ implies $i \in C$, so $e$ is incident to at least one vertex in $C$.
	
	Next, we show that $|C \cap K_\ell| \leq 1$ for each $K_\ell \in P$. Fix any negative component $K_\ell$ of $G^o$. Since $K_\ell$ is a negative component, it contains at least $|K_\ell|-1$ negative edges. Since $\pi^o$ is EFX$_0$, Proposition~\ref{prop:one-neg-or-all-dummies} implies that for each vertex $i$, the orientation $\pi^o$ contains a unique edge directed toward $i$ or every edge directed toward $i$ is a dummy edge. Since $K_\ell$ contains at least $|K_\ell|-1$ negative edges and each vertex of $K$ can have at most one negative edge directed towards it, the pigeonhole principle implies that $K_\ell$ contains at most one vertex to which dummy edges are directed in $\pi^o$. In other words, at most one vertex of $K_\ell$ is in $C$, i.~e., $|C \cap K_\ell| \leq 1$.
	
	Finally, we show that $C \cap D = \emptyset$. Suppose $i \in D$. Because $i \in D$, the negative component $K$ to which $i$ belongs contains $|V(K)|$ negative edges. Proposition~\ref{prop:one-neg-or-all-dummies} implies that for each vertex $j$, the orientation $\pi^o$ contains a unique edge directed toward $j$ or every edge directed toward $j$ is a dummy edge. In particular, each vertex of $K$ has at most one negative edge directed toward it. By the pigeonhole principle, each vertex of $K$ has exactly one negative edge of $K$ directed toward it. In particular, this is true for $i$, so Proposition~\ref{prop:one-neg-or-all-dummies} implies no dummy edge is directed toward $i$. Thus, $i \notin C$. It follows that $C \cap D = \emptyset$ and $H$ indeed has a $(P,D)$-vertex cover.
	
	(2:) Suppose the algorithm finds a $(P, D)$-vertex cover $C$ of $H$ on Line~\ref{line:2.subroutine}. First, we show that $\pi^o$ indeed orients every edge of $G^o$. Recall that $H$ is the graph with the same vertex set as $G^o$ and the set of dummy edges of $G^o$. Since $C$ is a $(P, D)$-vertex cover of $H$, each edge of $H$ (thereby each dummy edge of $G^o$) is incident to at least one vertex in $C$. It follows that every dummy edge of $G^o$ is oriented by $\pi^o$. On the other hand, let $e$ be a negative edge of $G^o$ and $K$ be the negative component containing $e$. Since the algorithm successfully finds a $(P, D)$-vertex cover $C$ of $H$ on Line~\ref{line:2.subroutine}, it does not terminate on Line~\ref{line:2.return-false-1}. So, $K$ contains exactly $|V(K)|$ or $|V(K)|-1$ negative edges, and $\pi^o$ orients all of them. It follows that $\pi^o$ indeed  orients every edge of $G^o$.
	
	It remains to show $\pi^o$ to be EFX$_0$. We verify the condition in Proposition~\ref{prop:one-neg-or-all-dummies} for every vertex $i$. Let $C' \subseteq C$ be the set of vertices with a dummy edge directed toward it in $\pi^o$ and consider any $i \in C'$. Since $C$ is a $(P, D)$-vertex cover of $H$, we have $C \cap D = \emptyset$, so $i \notin D$. Because $D$ is the set of vertices that belong to a negative component $K$ of $G^o$ that has exactly $|V(K)|$ negative edges, $i$ belongs to a negative component $K$ of $G^o$ containing exactly $|V(K)|-1$ negative edges. As discussed while defining $\pi^o$, at most one vertex of such a negative component $K$ could have received a dummy edge, which clearly is the vertex $i$. So, $\pi^o$ orients the negative edges of $K$ in the direction away from $i$, so that they form a spanning tree of $K$ rooted at $i$. Hence, every edge directed toward $i$ is a dummy edge. it follows that the condition in Proposition~\ref{prop:one-neg-or-all-dummies} holds for every $i \in C'$.
	
	On the other hand, the definition of $\pi^o$ ensures each vertex receives at most one negative edge. In particular, each vertex $i \notin C'$ receives no dummy edges and at most one negative edge, so the condition in Proposition~\ref{prop:one-neg-or-all-dummies} holds for every $i \notin C'$ as well. By Proposition~\ref{prop:one-neg-or-all-dummies}, $\pi^o$ is EFX$_0$.
	
	(3:) Finding the set $\mathcal{K}$ of negative components of $G^o$ on Line~\ref{line:2.find-neg-comps} can be done using a BFS using $O(|V(G^o)| + |E(G^o)|)$ time. Constructing $(H,P,D)$ on Lines~\ref{line:2.construct-HPD} requires counting the number of negative edges in each negative component $K \in \mathcal{K}$, which can be done in $O(|E(G^o)|)$ time. By Lemma~\ref{lemma:reduc:FindPDVertexCover}(3), \textsc{FindPDVertexCover} on Line~\ref{line:2.subroutine} runs in $O(|V(H)|^2)$ time.
	
	The construction of $\pi^o$ on Line~\ref{line:2.construct-pio} is done in two steps. First, orienting the dummy edges can be done in $O(|V(G^o)| + |E(G^o)|)$ time because there it requires checking which endpoint of each dummy edge is in the $(P,D)$-vertex cover $C$. Then, orienting the negative edges can be done in $O(|V(G^o)|^2)$ time according to Observation~\ref{obs:graph}.
	
	By summing the above and observing that $|V(H)| = |V(G^o)|$, we conclude that \textsc{FindEFXOrientObj} runs in $O(|V(G^o)|^2)$ time.
\end{proof}

\begin{lemma}\label{lemma:FindEFXOrientObj}
	Given an instance $(G^o, u^o)$ of \textsc{EFX$_0$-Orientation-Objective}, \textsc{FindEFXOrientObj} outputs an EFX$_0$ orientation $\pi^o$ of $G^o$ if one exists and \normalfont{\texttt{false}} otherwise in $O(|V(G^o)|^2)$ time.
\end{lemma}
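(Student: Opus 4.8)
The plan is to treat this lemma as a bookkeeping corollary that stitches together the three parts of Lemma~\ref{lemma:reduc:FindEFXOrientObj} together with Lemma~\ref{lemma:reduc:FindPDVertexCover}; there is essentially no new mathematical content, only a careful case analysis of what the algorithm returns. I would split the correctness claim into the two implications "if an EFX$_0$ orientation exists then the output is one" and "if the algorithm outputs an orientation then it is EFX$_0$", and then observe these together pin down exactly when \texttt{false} is returned.

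First I would handle the forward direction. Suppose $G^o$ has an EFX$_0$ orientation. By Lemma~\ref{lemma:reduc:FindEFXOrientObj}(1), every negative component $K$ has at most $|V(K)|$ negative edges, so the test on the line returning \texttt{false} on Line~\ref{line:2.return-false-1} fails and the algorithm proceeds to construct $(H,P,D)$; moreover the same part guarantees $H$ has a $(P,D)$-vertex cover. Then Lemma~\ref{lemma:reduc:FindPDVertexCover}(1) shows the constructed \textsc{2SAT} instance $\phi$ is satisfiable, and Lemma~\ref{lemma:reduc:FindPDVertexCover}(2) shows \textsc{FindPDVertexCover} returns an actual $(P,D)$-vertex cover $C$ rather than \texttt{false}, so the algorithm does not terminate on Line~\ref{line:2.return-false-2}. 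Finally, Lemma~\ref{lemma:reduc:FindEFXOrientObj}(2) applied to this $C$ shows the orientation $\pi^o$ returned on Line~\ref{line:2.return} is a well-defined EFX$_0$ orientation of $G^o$.

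Next I would handle the reverse direction, which is immediate: whenever the algorithm returns an orientation $\pi^o$ (rather than \texttt{false}), it has reached Line~\ref{line:2.return}, which only happens after \textsc{FindPDVertexCover} returned a genuine $(P,D)$-vertex cover $C$ of $H$ (by Lemma~\ref{lemma:reduc:FindPDVertexCover}(2), a non-\texttt{false} output is always a valid cover), so Lemma~\ref{lemma:reduc:FindEFXOrientObj}(2) applies and $\pi^o$ is EFX$_0$. Combining the two directions: the algorithm returns \texttt{false} if and only if $G^o$ has no EFX$_0$ orientation, and otherwise it returns a valid one. The running-time bound $O(|V(G^o)|^2)$ is exactly Lemma~\ref{lemma:reduc:FindEFXOrientObj}(3).

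The only place requiring any care — and the closest thing to an obstacle — is making sure the "\texttt{false} is returned precisely when no orientation exists" equivalence is argued via the contrapositive of Lemma~\ref{lemma:reduc:FindEFXOrientObj}(1) rather than asserted directly: if the algorithm outputs \texttt{false}, then either some negative component is too dense, or $H$ has no $(P,D)$-vertex cover (here invoking Lemma~\ref{lemma:reduc:FindPDVertexCover}(1) to convert an unsatisfiable $\phi$ back into the nonexistence of a cover), and in either case Lemma~\ref{lemma:reduc:FindEFXOrientObj}(1) forbids an EFX$_0$ orientation. Everything else is routine chaining of the cited results.
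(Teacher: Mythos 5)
Your proposal is correct and takes essentially the same route as the paper: both proofs are the same bookkeeping chain of Lemma~\ref{lemma:reduc:FindEFXOrientObj}(1)--(3) together with Lemma~\ref{lemma:reduc:FindPDVertexCover}, with identical arguments for the case where an EFX$_0$ orientation exists. The only cosmetic difference is in the \texttt{false} direction, where you argue soundness of any returned cover and invoke the contrapositive of Lemma~\ref{lemma:reduc:FindEFXOrientObj}(1) (via the contrapositive of Lemma~\ref{lemma:reduc:FindPDVertexCover}(1)), while the paper invokes the contrapositive of Lemma~\ref{lemma:reduc:FindEFXOrientObj}(2); both are equally valid.
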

\begin{proof}
	If $G^o$ has an EFX$_0$ orientation, then no negative component $K$ of $G^o$ contains $>|V(K)|$ negative edges and $H$ has a $(P,D)$-vertex cover by Lemma~\ref{lemma:reduc:FindEFXOrientObj}(1). So, Line~\ref{line:2.subroutine} finds a $(P, D)$-vertex cover $C$ of $H$ by Lemma~\ref{lemma:reduc:FindPDVertexCover} and \textsc{FindEFXOrientObj} outputs the EFX$_0$ orientation $\pi^o$ by Lemma~\ref{lemma:reduc:FindEFXOrientObj}(2).
	
	Otherwise, $G^o$ does not have an EFX$_0$ orientation. If some negative component $K$ of $G^o$ contains $>|V(K)|$ negative edges, then \textsc{FindEFXOrientObj} outputs \texttt{false} on Line~\ref{line:2.return-false-1}, Assume every negative component $K$ of $G^o$ contains at most $|V(K)|$ negative edges, so \textsc{FindEFXOrientObj} constructs the instance $(H, P, D)$ of \textsc{PD-Vertex-Cover} on Line~\ref{line:2.construct-HPD} and attempts to find a $(P, D)$-vertex cover $C$ of $H$ on Line~\ref{line:2.subroutine}. By the contrapositive of Lemma~\ref{lemma:reduc:FindEFXOrientObj}(2), such a $C$ does not exist, so \textsc{FindEFXOrientObj} outputs \texttt{false} on Line~\ref{line:2.return-false-2}.
\end{proof}

\subsection{\textsc{FindEFXOrientation} (Algorithm~\ref{alg:FindEFXOrientation})}

\textsc{FindEFXOrientation} (Algorithm~\ref{alg:FindEFXOrientation}) accepts an instance $(G,u)$ of \textsc{EFX$_0$-Orientation} as input and outputs an EFX$_0$ orientation of $G$ if one exists and \texttt{false} otherwise.

It first constructs the instance $(G^o, u^o)$ of \textsc{EFX$_0$-Orientation-Objective} by subdividing each non-objective edge of $G$. Specifically, for each non-objective edge $e_{ij} = \{i, j\}$, first assume $u_i(e_{ij}) = 0$ without loss of generality by possibly exchanging the labels of $i$ and $j$, then replace $e_{ij}$ by a new vertex $k$ and two new edges $e_{ik} = \{i, k\}, e_{jk} = \{j, k\}$. The utilities of the new edges are defined as shown in Figure~\ref{fig:subdivide}, in which $\beta \coloneqq u_j(e_{ij})$. The objective edges and their utilities remain unchanged.

It then uses \textsc{FindEFXOrientObj} (Algorithm~\ref{alg:FindEFXOrientObj}) on Line~\ref{line:3.subroutine} to try to find an EFX$_0$ orientation of $G^o$. If none exists, it outputs \texttt{false} on Line~\ref{line:3.return-false}. Otherwise, an EFX$_0$ orientation $\pi^o$ of $G^o$ is successfully found. Using $\pi^o$, the algorithm constructs and outputs the orientation $\pi$ of $G$ defined as follows.

Let $e_{ij}$ be an edge of $G$. If $e_{ij}$ was not subdivided in the construction of $G^o$, then it is also an edge in $G^o$. In this case, $\pi$ orients $e_{ij}$ in $G$ in the same way as $\pi^o$ orients $e_{ij}$ in $G^o$.

Otherwise, $e_{ij}$ was subdivided in the construction of $G^o$, and corresponds to two edges $e_{ik}, e_{jk}$ of $G^o$, where $u_i^o(e_{ik}) = u_i(e_{ij}) = 0$. In this case, $\pi$ orients $e_{ij}$ in $G$ in the same direction as $\pi^o$ orients $e_{ik}$ in $G^o$ (with respect to Figure~\ref{fig:subdivide}). That is, if $\pi^o$ orients $e_{ik}$ toward $i$ in $G^o$, then $\pi$ orients $e_{ij}$ toward $i$ in $G$. Otherwise, $\pi^o$ orients $e_{ik}$ toward $k$ in $G^o$, and $\pi$ orients $e_{ij}$ toward $j$ in $G$.

\begin{figure}
	\centering
	\subfigure[the edge $e_{ij}$ in $G$]{
		\centering
		\begin{tikzpicture}
			\node[blkvertex] (i) at (0, 0) {};
			\node[blkvertex] (j) at (1.5, 0) {};
			
			\node (li) at (0, -0.25) {$i$};
			\node (lj) at (1.5, -0.25) {$j$};
			
			\node (l1) at (0.2, 0.2) {$0$};
			\node (l2) at (1.3, 0.2) {$\beta$};
			
			\draw (i) to node [below] {$e_{ij}$} (j);
	\end{tikzpicture}}
	\hspace{1cm}
	\subfigure[after subdivision in $G^o$]{
		\centering
		\begin{tikzpicture}
			\node[blkvertex] (i) at (0, 0) {};
			\node[blkvertex] (j) at (3, 0) {};
			\node[blkvertex] (k) at (1.5, 0) {};
			
			\node (li) at (0, -0.25) {$i$};
			\node (lk) at (1.5, -0.25) {$k$};
			\node (lj) at (3, -0.25) {$j$};
			
			\node (l1) at (0.2, 0.2) {$0$};
			\node (l2) at (1.3, 0.2) {$0$};
			\node (l3) at (1.7, 0.2) {$\beta$};
			\node (l4) at (2.8, 0.2) {$\beta$};
			
			\draw (i) to node [below] {$e_{ik}$} (k);
			\draw (k) to node [below] {$e_{jk}$} (j);
	\end{tikzpicture}}
	
	\caption{Subdivision of $e_{ij}$ during the construction of $G^o$ by \textsc{FindEFXOrientation}. The labels above an edge indicate the utility the edge has to its two endpoints. We write $\beta \coloneqq u_j(e_{ij})$ for clarity.}
	\label{fig:subdivide}
\end{figure}

\begin{algorithm}
	\caption{$\textsc{FindEFXOrientation}(G,u)$}
	\label{alg:FindEFXOrientation}
	
	\hspace*{\algorithmicindent} \textbf{Input:} An instance $(G, u)$ of \textsc{EFX$_0$-Orientation}.
	
	\hspace*{\algorithmicindent} \textbf{Output:} An EFX$_0$ orientation $\pi$ of $G$ if one exists and \texttt{false} otherwise.
	
	\begin{algorithmic}[1] 
		\State $(G^o, u^o) \gets$ the instance of \textsc{EFX$_0$-Orientation-Objective} defined above

		\State $\pi^o \gets \textsc{FindEFXOrientObj}(G^o, u^o)$\label{line:3.subroutine} \Comment{Algorithm~\ref{alg:FindEFXOrientObj}}
		
		\If{$\pi^o = \texttt{false}$}
		\State \Return \texttt{false}\label{line:3.return-false}
		\EndIf
		
		\State \Return $\pi \coloneqq$ the orientation of $G$ defined above \label{line:3.return-pi}
		
	\end{algorithmic}
\end{algorithm}

\begin{lemma}\label{lemma:reduc:FindEFXOrientation}
	The following statements hold for \textsc{FindEFXOrientation} (Algorithm~\ref{alg:FindEFXOrientation}):
	\begin{enumerate}
		\item If $G$ has an EFX$_0$ orientation, then so does $G^o$.
		\item If $\pi^o$ is an EFX$_0$ orientation of $G^o$, then $\pi$ is an EFX$_0$ orientation of $G$.
	\end{enumerate}
\end{lemma}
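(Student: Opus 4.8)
The plan is to prove the two statements separately, in each direction combining Proposition~\ref{prop:one-neg-or-all-dummies} for the objective instance $(G^o,u^o)$ with the following elementary dichotomy for general instances, which is proved by the same argument as Proposition~\ref{prop:one-neg-or-all-dummies}: \emph{if $\pi$ is an EFX$_0$ orientation of $G$, then for every vertex $v$ either $\pi$ directs at most one edge toward $v$, or every edge $\pi$ directs toward $v$ has zero utility to $v$.} (Indeed, if $v$ is assigned an edge $e'$ with $u_v(e')<0$ together with a distinct edge $e$, then $u_v(\pi_v\setminus\{e\})\leq u_v(e')<0=u_v(\pi_w)$, where $w\neq v$ is the other endpoint of whichever of $e,e'$ is not a self-loop at $v$ --- at most one of the two is --- so $v$ strongly envies $w$.)

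For statement~(1): given an EFX$_0$ orientation $\pi$ of $G$, I would build $\pi^o$ by keeping the orientation of every non-subdivided edge, and for each subdivided edge $e_{ij}$ (split into the dummy edge $e_{ik}$ and the negative edge $e_{jk}$, with $u_i(e_{ij})=0$ and $\beta\coloneqq u_j(e_{ij})<0$) setting $\pi^o(e_{ik})=i,\ \pi^o(e_{jk})=k$ if $\pi$ directs $e_{ij}$ toward $i$, and $\pi^o(e_{ik})=k,\ \pi^o(e_{jk})=j$ if $\pi$ directs $e_{ij}$ toward $j$. Two checks remain. First, every new vertex $k$ receives exactly one of $e_{ik},e_{jk}$ under $\pi^o$, so the condition of Proposition~\ref{prop:one-neg-or-all-dummies} holds at $k$. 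Second, going through the edges incident to an original vertex $v$ (non-subdivided edges, and subdivided edges in which $v$ is the endpoint of zero or of negative utility), one sees that the edges $\pi^o$ directs toward $v$ are exactly the images of the edges $\pi$ directs toward $v$ under the natural subdivision correspondence, that this correspondence preserves the utility of each edge to $v$, and that it sends the zero-utility in-edges of $v$ to dummy edges of $G^o$; the dichotomy above then gives that at $v$ there is either a unique in-edge or only dummy in-edges. By Proposition~\ref{prop:one-neg-or-all-dummies}, $\pi^o$ is EFX$_0$.

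For statement~(2): given an EFX$_0$ orientation $\pi^o$ of $G^o$, I would argue contrapositively. If $\pi$ were not EFX$_0$, then some (necessarily original) vertex $v$ would strongly envy some vertex $w$, witnessed by an edge $e\in\pi_v$ with $u_v(\pi_v\setminus\{e\})<u_v(\pi_w)\leq 0$. The key step is to construct an injective map $g$ from $\pi_v$ into the set of edges $\pi^o$ directs toward $v$ satisfying $u_v^o(g(f))=u_v(f)$ for every $f\in\pi_v$: a non-subdivided edge maps to itself; a subdivided edge $e_{ij}\in\pi_v$ with $v=i$ maps to $e_{ik}$, which $\pi^o$ directs toward $i$ by the definition of $\pi$; and a subdivided edge $e_{ij}\in\pi_v$ with $v=j$ maps to $e_{jk}$ --- here the definition of $\pi$ forces $\pi^o(e_{ik})=k$, and then Proposition~\ref{prop:one-neg-or-all-dummies} applied at the degree-two vertex $k$ forces $\pi^o(e_{jk})=j$, so $g$ is well defined. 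Applying Proposition~\ref{prop:one-neg-or-all-dummies} at $v$: if $\pi^o$ directs a unique edge toward $v$, then $|\pi_v|\leq 1$, so $u_v(\pi_v\setminus\{e\})=u_v(\emptyset)=0$, contradicting $u_v(\pi_v\setminus\{e\})<u_v(\pi_w)\leq 0$; if every edge $\pi^o$ directs toward $v$ is a dummy edge, then since $g$ preserves utilities every edge of $\pi_v$ has zero utility to $v$, so $u_v(\pi_v\setminus\{e\})=0$, again a contradiction. Hence $\pi$ is EFX$_0$.

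I expect the main obstacle to be the interplay between the definition of $\pi$ and the EFX$_0$ condition at the new degree-two vertices. The orientation $\pi$ of a subdivided edge $e_{ij}$ is defined only through how $\pi^o$ orients its \emph{dummy} half $e_{ik}$, so a priori $\pi^o$ could also direct the negative half $e_{jk}$ toward $k$; excluding this via Proposition~\ref{prop:one-neg-or-all-dummies} at $k$ (for statement~(2)) and, symmetrically, choosing the orientation of the free half correctly so that $k$ ends up with a single in-edge (for statement~(1)) is exactly where the argument has content. The remainder is bookkeeping over the types of edges and the two roles an original vertex can play in a subdivided edge.
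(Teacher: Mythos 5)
Your proposal is correct and follows essentially the same route as the paper: the construction of $\pi^o$ in part~(1) is identical, and both arguments hinge on applying Proposition~\ref{prop:one-neg-or-all-dummies} at the degree-two subdivision vertices together with the fact that the subdivision correspondence preserves each real endpoint's utilities. Your verification of part~(2) via the injective utility-preserving map (arguing the contrapositive) is just a more explicit rendering of the paper's three-case local comparison in Figure~\ref{fig:construct-pi}, and your auxiliary dichotomy for the non-objective instance is the necessity direction of Proposition~\ref{prop:one-neg-or-all-dummies} re-proved, where the paper instead transfers the no-strong-envy condition directly by noting real vertices see no local difference.
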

\begin{proof}
	(1:) Suppose $\pi$ is an EFX$_0$ orientation of $G$. We construct the orientation $\pi^o$ of $G^o$ as follows. For each edge $e$ of $G$ that was not subdivided during the construction of $G^o$, we let $\pi^o$ orient it in $G^o$ in the same direction that $\pi$ orients it in $G$. On the other hand, suppose $e_{ik}$ and $e_{jk}$ are a pair of edges of $G^o$ that resulted from the subdivision of the edge $e_{ij}$ of $G$. If $\pi$ orients $e_{ij}$ leftward in terms of Figure~\ref{fig:subdivide} (resp.\ rightward), then $\pi^o$ orients both $e_{ik}$ and $e_{jk}$ leftward (resp.\ rightward).
	
	We claim $\pi^o$ is EFX$_0$. We refer to a vertex that is shared between both $G$ and $G^o$ as a {\em real} vertex, and a vertex that exists only in $G^o$ as a {\em fake} vertex. Fake vertices result from the subdivisions during the construction of $G^o$. Clearly, from the local perspective of each real vertex, the orientations $\pi$ and $\pi^o$ appear the same. Since no real vertex strongly envies a neighbour in $\pi$, no real vertex strongly envies a neighbour in $\pi^o$. On the other hand, the construction of $\pi^o$ ensures each fake vertex receives exactly one incident edge, so no fake vertex strongly envies a neighbour in $\pi^o$. Thus, $\pi^o$ is EFX$_0$.
	
	(2:) Suppose $\pi^o$ is an EFX$_0$ orientation of $G^o$. Proposition~\ref{prop:one-neg-or-all-dummies} implies for each vertex $i$, the orientation $\pi^o$ contains a unique edge directed toward $i$ or every edge directed toward $i$ is a dummy edge. In particular, for any fake vertex $k$ in $G^o$ and the two edges $e_{ik}, e_{jk}$ incident to it, $e_{ik}$ is a dummy edge and $e_{jk}$ is a negative edge, so at most one of them is directed toward $k$ in $\pi^o$. Hence, there are only three possibilities for how pairs of fake edges are oriented in $\pi^o$ (see Figure~\ref{fig:construct-pi}).
	
	Consider the first two cases depicted in Figure~\ref{fig:construct-pi}. In Figure~\ref{fig:construct-pi}(a), both $e_{ik}$ and $e_{jk}$ are directed rightward in $\pi^o$, and $\pi$ orients the corresponding edge $e_{ij}$ rightward. In Figure~\ref{fig:construct-pi}(b), both $e_{ik}$ and $e_{jk}$ are directed leftward in $\pi^o$, and $\pi$ orients $\pi_{ij}$ leftward. In either case, the real vertices $i$ and $j$ see no difference between $\pi$ and $\pi^o$ locally in terms of the utilities they derive from the edges $e_{ij}, e_{ik}, e_{jk}$ between $i$ and $j$ that they receive, because $u_i(e_{ij}) = u_i(e_{ik})$ and $u_j(e_{ij}) = u_j(e_{jk})$.
	
	In the case depicted by Figure~\ref{fig:construct-pi}(c), the orientation $\pi$ directs the edge $e_{ij}$ toward $i$. So, $i$ sees no difference between $\pi$ and $\pi^o$ locally between $i$ and $j$, and $j$ thinks that it has given an edge of utility $u_j(e_{jk})$ to $i$ by going from $\pi^o$ to $\pi$.
	
	Hence, for each pair of adjacent vertices $i, j$ of $G$, the vertex $i$, having possibly given some of its edges in $\pi^o$ to $j$ going from $\pi^o$ to $\pi$, does not strongly envy $j$ in $\pi$ because it does not strongly envy $j$ in $\pi^o$. Thus, $\pi$ is EFX$_0$.
\end{proof}

\begin{figure}
	\centering
	\subfigure[]{
		\centering
		\begin{tikzpicture}
			\node[blkvertex] (i) at (0, 0) {};
			\node[blkvertex] (j) at (3, 0) {};
			\node[blkvertex] (k) at (1.5, 0) {};
			
			\node (li) at (0, -0.25) {$i$};
			\node (lk) at (1.5, -0.25) {$k$};
			\node (lj) at (3, -0.25) {$j$};
			
			\node (l1) at (0.2, 0.2) {$0$};
			\node (l2) at (1.3, 0.2) {$0$};
			\node (l3) at (1.7, 0.2) {$\beta$};
			\node (l4) at (2.8, 0.2) {$\beta$};
			
			\draw[arc] (i) to node [below] {$e_{ik}$} (k);
			\draw[arc] (k) to node [below] {$e_{jk}$} (j);
		\end{tikzpicture}}
	\subfigure[]{
		\centering
		\begin{tikzpicture}
			\node[blkvertex] (i) at (0, 0) {};
			\node[blkvertex] (j) at (3, 0) {};
			\node[blkvertex] (k) at (1.5, 0) {};
			
			\node (li) at (0, -0.25) {$i$};
			\node (lk) at (1.5, -0.25) {$k$};
			\node (lj) at (3, -0.25) {$j$};
			
			\node (l1) at (0.2, 0.2) {$0$};
			\node (l2) at (1.3, 0.2) {$0$};
			\node (l3) at (1.7, 0.2) {$\beta$};
			\node (l4) at (2.8, 0.2) {$\beta$};
			
			\draw[arc] (k) to node [below] {$e_{ik}$} (i);
			\draw[arc] (j) to node [below] {$e_{jk}$} (k);
		\end{tikzpicture}}
	\subfigure[]{
		\centering
		\begin{tikzpicture}
			\node[blkvertex] (i) at (0, 0) {};
			\node[blkvertex] (j) at (3, 0) {};
			\node[blkvertex] (k) at (1.5, 0) {};
			
			\node (li) at (0, -0.25) {$i$};
			\node (lk) at (1.5, -0.25) {$k$};
			\node (lj) at (3, -0.25) {$j$};
			
			\node (l1) at (0.2, 0.2) {$0$};
			\node (l2) at (1.3, 0.2) {$0$};
			\node (l3) at (1.7, 0.2) {$\beta$};
			\node (l4) at (2.8, 0.2) {$\beta$};
			
			\draw[arc] (k) to node [below] {$e_{ik}$} (i);
			\draw[arc] (k) to node [below] {$e_{jk}$} (j);
		\end{tikzpicture}}
	
	\caption{The three possibilities involving the fake edges $e_{ik}, e_{jk}$ in the EFX$_0$ orientation $\pi^o$ of $G^o$.}
	\label{fig:construct-pi}
\end{figure}

\begin{theorem}\label{thm:main}
	There exists a $O((|V(G)|+|E(G)|)^2)$-time algorithm that decides whether a graph $G$ of chores has an EFX$_0$ orientation and outputs one if it exists.
\end{theorem}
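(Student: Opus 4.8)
The plan is to verify that \textsc{FindEFXOrientation} (Algorithm~\ref{alg:FindEFXOrientation}) is the desired algorithm. Essentially all of the correctness has already been isolated in Lemma~\ref{lemma:reduc:FindEFXOrientation} and Lemma~\ref{lemma:FindEFXOrientObj}, so the proof is mostly an assembly argument together with a careful running-time accounting that tracks the blow-up in instance size caused by subdividing non-objective edges.

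First I would establish correctness. Suppose $G$ has an EFX$_0$ orientation. By Lemma~\ref{lemma:reduc:FindEFXOrientation}(1), $G^o$ then also has an EFX$_0$ orientation, so by Lemma~\ref{lemma:FindEFXOrientObj} the call to \textsc{FindEFXOrientObj} on Line~\ref{line:3.subroutine} returns an EFX$_0$ orientation $\pi^o$ of $G^o$ (rather than \texttt{false}), and Lemma~\ref{lemma:reduc:FindEFXOrientation}(2) guarantees that the orientation $\pi$ returned on Line~\ref{line:3.return-pi} is EFX$_0$. Conversely, whenever \textsc{FindEFXOrientation} returns an orientation, it has obtained an EFX$_0$ orientation $\pi^o$ of $G^o$ from the subroutine, and Lemma~\ref{lemma:reduc:FindEFXOrientation}(2) again certifies that the returned $\pi$ is EFX$_0$ for $G$; and if $G$ had an EFX$_0$ orientation while the algorithm returned \texttt{false}, that would contradict the previous sentence. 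Hence the algorithm outputs an EFX$_0$ orientation of $G$ exactly when one exists and \texttt{false} otherwise.

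Next I would bound the running time. Constructing $(G^o, u^o)$ replaces each non-objective edge of $G$ by one new vertex and two new edges and leaves objective edges untouched, so $|V(G^o)| \le |V(G)| + |E(G)|$ and $|E(G^o)| \le 2|E(G)|$, and the construction itself can be carried out in $O(|V(G)| + |E(G)|)$ time by scanning each edge once (testing whether an edge is objective and, if not, swapping its endpoint labels and splitting it, all in $O(1)$ per edge). By Lemma~\ref{lemma:FindEFXOrientObj}, the call on Line~\ref{line:3.subroutine} runs in $O(|V(G^o)|^2) = O((|V(G)| + |E(G)|)^2)$ time. Finally, reconstructing $\pi$ from $\pi^o$ costs $O(1)$ per edge of $G$ --- either copying the orientation of the corresponding un-subdivided edge, or reading off the orientation of the child edge $e_{ik}$ --- for a total of $O(|V(G)| + |E(G)|)$. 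Summing yields the claimed $O((|V(G)| + |E(G)|)^2)$ bound.

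The main thing to be careful about is precisely this size blow-up: subdivision can introduce $\Theta(|E(G)|)$ new vertices, so the quadratic term must be expressed in terms of $|V(G)| + |E(G)|$ rather than $|V(G)|$ alone. Once that is accounted for, the three reductions of Figure~\ref{fig:schematic} compose without friction, and nothing beyond quoting the stated lemmas is required; there is no genuine mathematical obstacle remaining at this level, since the substantive work has all been discharged in the earlier lemmas.
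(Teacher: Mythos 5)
Your proposal is correct and follows essentially the same route as the paper: correctness via Lemma~\ref{lemma:reduc:FindEFXOrientation}(1), Lemma~\ref{lemma:FindEFXOrientObj}, and Lemma~\ref{lemma:reduc:FindEFXOrientation}(2), and the running time via the bound $|V(G^o)| \le |V(G)| + |E(G)|$ feeding into the $O(|V(G^o)|^2)$ cost of the subroutine. The only cosmetic difference is that you phrase the "no orientation" case as soundness of the output (any returned $\pi$ is EFX$_0$), whereas the paper invokes the contrapositive of Lemma~\ref{lemma:reduc:FindEFXOrientation}(2) directly; these are equivalent.
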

\begin{proof}
	We show \textsc{FindEFXOrientation} to be such an algorithm. If $G$ has an EFX$_0$ orientation, then $G^o$ does as well by Lemma~\ref{lemma:reduc:FindEFXOrientation}(1), so the subroutine \textsc{FindEFXOrientObj} produces such an orientation $\pi^o$ of $G^o$ by Lemma~\ref{lemma:FindEFXOrientObj}. In this case, \textsc{FindEFXOrientation} outputs $\pi$, which is an EFX$_0$ orientation of $G$ by Lemma~\ref{lemma:reduc:FindEFXOrientation}(2). Otherwise, $G$ does not have an EFX$_0$ orientation, so so $G^o$ has no EFX$_0$ orientation by Lemma~\ref{lemma:reduc:FindEFXOrientation}(2), causing the subroutine \textsc{FindEFXOrientObj} on Line~\ref{line:3.subroutine} to output \texttt{false} and \textsc{FindEFXOrientation} to also output \texttt{false}.
	
	We now analyze the running time. Constructing $(G^o, u^o)$ takes $O(|E(G)|)$ time because we subdivide each edge at most once. \textsc{FindEFXOrientObj} on Line~\ref{line:3.subroutine} takes $O(|V(G^o)|^2)$ time by Lemma~\ref{lemma:reduc:FindEFXOrientObj}(3). Since $G^o$ is constructed by subdividing each edge of $G$ at most once, and each subdivision introduces a new vertex, we have $|V(G^o)| \leq |V(G)| + |E(G)|$. Hence, $O(|V(G^o)|^2) = O((|V(G)|+|E(G)|)^2)$. Finally, constructing $\pi$ using $\pi^o$ takes $O(|E(G)|)$ time because it requires orienting each edge of $G$. Thus, \textsc{FindEFXOrientation} runs in $O((|V(G)|+|E(G)|)^2)$ time.
\end{proof}

\section{EF1 and EFX$_0$ Orientations of Multigraphs}\label{section:multigraphs}

We turn our attention to multigraphs and consider the problem of deciding if a multigraph has EF1 or EFX$_0$ orientations. We show that both of these problems are NP-complete using a reduction from the NP-complete problem \textsc{Partition} \citep{karp1972reducibility}.

We present two different reductions. The first reduction (Theorem~\ref{thm:main-multi}) is relatively simple but results in a multigraph with two self-loops. The second reduction (Theorem~\ref{thm:main-no-self-loop}) is more complex but has the advantage of not using self-loops.

\begin{theorem}\label{thm:main-multi}
	Deciding whether a multigraph $G$ of chores has an EF1 (resp.\ EFX$_0$) orientation is NP-complete, even if $G$ is symmetric, has only two vertices, and utility functions are additive.
\end{theorem}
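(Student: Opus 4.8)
The plan is to reduce from the NP-complete problem \textsc{Partition}. Membership in NP is immediate: given an orientation (equivalently, for each edge a choice of endpoint it is directed towards), both the EF1 and the EFX$_0$ condition can be checked in polynomial time, and the utility functions used below are additive with polynomial-size descriptions. So it suffices, for each of the two fairness notions separately, to turn a \textsc{Partition} instance $S = \{s_1,\dots,s_k\}$ in polynomial time into a symmetric additive multigraph of chores on two vertices that has the corresponding orientation if and only if $S$ admits an equipartition. Write $T = \sum_{i=1}^k s_i$, and note that we may assume $s_i \ge 1$ for every $i$, since deleting zeros from $S$ affects neither the existence of an equipartition nor, as we will see, the construction.

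The skeleton of both constructions is identical: take two vertices $a$ and $b$; for each $i$ a parallel edge $e_i = \{a,b\}$ with $u_a(e_i) = u_b(e_i) = -s_i$; and a single self-loop $\ell_a$ at $a$ and a single self-loop $\ell_b$ at $b$, whose weights will be chosen differently for the two notions. Since each self-loop must be directed towards its unique endpoint, an orientation $\pi$ is determined entirely by the partition $A \sqcup B = \{1,\dots,k\}$ recording which endpoint each $e_i$ is directed to. Writing $\sigma_A = \sum_{i \in A} s_i$ and $\sigma_B = T - \sigma_A$, we have $u_a(\pi_a) = u_a(\ell_a) - \sigma_A$ and $u_a(\pi_b) = -\sigma_B$ (and symmetrically for $b$), and the only envy that can occur is between $a$ and $b$. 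The plan is to pick the loop weights so that the fairness condition for $a$ reads exactly $\sigma_A \le \sigma_B$ and that for $b$ reads $\sigma_B \le \sigma_A$, so that a fair orientation exists precisely when some partition satisfies $\sigma_A = \sigma_B$, i.e.\ when $S$ has an equipartition.

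For the EF1 problem I would set $u_a(\ell_a) = u_b(\ell_b) = -T$. Because $s_i \le T$ for every $i$, the loop $\ell_a$ is a worst chore in $\pi_a = \{\ell_a\}\cup\{e_i : i\in A\}$, so by additivity the best way for $a$ to meet its EF1 obligation towards $b$ is to ignore $\ell_a$; this works iff $u_a(\pi_a) + T = -\sigma_A \ge -\sigma_B = u_a(\pi_b)$, i.e.\ $\sigma_A \le \sigma_B$, and symmetrically for $b$. Hence $G$ has an EF1 orientation iff $\sigma_A = \sigma_B$ for some partition. For the EFX$_0$ problem I would instead set $u_a(\ell_a) = u_b(\ell_b) = -1$. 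Since every $s_i \ge 1$, the loop $\ell_a$ is now a \emph{mildest} chore in $\pi_a$ (one whose removal raises $a$'s utility the least), so $a$ strongly envies $b$ exactly when even ignoring $\ell_a$ fails to remove the envy, i.e.\ when $u_a(\pi_a) + 1 = -\sigma_A < -\sigma_B = u_a(\pi_b)$, i.e.\ $\sigma_A > \sigma_B$; symmetrically for $b$. Hence $G$ has an EFX$_0$ orientation iff $\sigma_A = \sigma_B$ for some partition. Both constructions are plainly polynomial-time, produce symmetric additive instances on two vertices (with two self-loops), and yield the claimed equivalences, giving NP-hardness and therefore NP-completeness of both problems.

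The one point requiring care is the choice of loop weights, which must be opposite for the two notions: EF1 compares bundles after deleting one worst chore, so the loop should itself be a worst chore to make the inequality tight, whereas EFX$_0$ must survive deleting the mildest chore, so the loop should be the mildest one. The latter is exactly why the assumption $s_i \ge 1$ matters: a value $s_i = 0$ would turn $e_i$ into a dummy edge, and a dummy edge landing in $\pi_a$ would pin $\max_{e\in\pi_a} u_a(e)$ at $0$ rather than at $u_a(\ell_a)$, breaking the EFX$_0$ calculation. Once this is set up, the remaining verifications — including the degenerate cases $A = \emptyset$ (where $b$ strongly envies $a$ unless $T = 0$) and $k = 0$ — are routine additive-utility computations.
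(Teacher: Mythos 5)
Your proposal is correct and follows essentially the same reduction as the paper: Partition mapped to a two-vertex symmetric additive multigraph with parallel edges of weight $-s_i$ and one self-loop per vertex, where the loop weight forces the EF1 (resp.\ EFX$_0$) condition to read $\sigma_A=\sigma_B$. The only cosmetic difference is the loop weight: the paper uses any $\alpha<-\max_i s_i$ for EF1 and $\alpha=0$ for EFX$_0$ (which avoids your extra, though harmless, assumption $s_i\ge 1$), whereas you use $-T$ and $-1$; the logic is otherwise identical.
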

\begin{proof}
	Both of these problems are in NP because verifying whether a given orientation of a multigraph of chores is EF1 or EFX$_0$ can clearly be done in polynomial time.
	
	We reduce an instance $S = \{s_1, s_2, \dots, s_k\}$ of \textsc{Partition} to each of the EF1 and EFX$_0$ orientation problems. We give the EF1 reduction and show how to adapt it for EFX$_0$.
	
	Let $\alpha < -\max_{i} s_i < 0$. Construct a multigraph $G$ on two vertices $a, b$ as follows. For each $s_i \in S$, create an edge $e_i$ between $a$ and $b$ and set $u_a(e_i) = u_b(e_i) = -s_i$. Create two self-loops $e_a, e_b$ at each of $a, b$, respectively, and set $u_a(e_a) = u_b(e_b) = \alpha$. Assume the utility functions are additive.
	
	Partitions $S_a \cup S_b$ of $S$ correspond to orientations of the edges between $a$ and $b$ in a natural way. Moreover, because both $a$ and $b$ receive their respective self-loop which has $\alpha < -\max_{i} s_i < 0$ utility, the EF1 criteria requires that their envy for each other be alleviated when they ignore their respective self-loops. Thus,  $\sum_{x \in S_a} x = \sum_{x \in S_b} x$ if and only if $G$ has an EF1 orientation.
	
	To adapt the reduction for EFX$_0$, let $\alpha = 0$ instead.
\end{proof}

We now show that deciding whether EF1 orientations exist remain NP-complete even if the multigraph $G$ contains no self-loops. To do this, we again rely on a reduction from \textsc{Partition}. For any instance $S = \{s_1, s_2, \dots, s_k\}$ of \textsc{Partition}, we define a multigraph $G$ on three vertices $a, b, c$ as follows. All of the edges $G$ contains has equal utility to both endpoints (called its {\em weight}). For each $s_i \in S$, create an edge between $a$ and $b$ of weight $-s_i$. Then, between $c$ and each of $a, b$, create two edges of weight $-T \coloneqq \sum_{i \in [k]} s_i$. (See Figure~\ref{fig:multi-no-loop}.)

\begin{lemma}\label{lemma:multigraph-reduction}
	In any EF1 orientation of $G$, at least one edge between $c$ and $a$ is oriented toward $a$ (and similarly between $c$ and $b$ by symmetry).
\end{lemma}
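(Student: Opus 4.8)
The plan is to argue by contradiction, reusing the mechanism behind Proposition~\ref{prop:EF1-graphs}: a vertex that keeps two chores which are both strictly negative for it, while an adjacent vertex holds nothing visible to it, must strongly envy that neighbour. Concretely, suppose $\pi$ is an EF1 orientation of $G$ in which \emph{neither} of the two edges between $c$ and $a$ is oriented toward $a$; equivalently, both of them are oriented toward $c$. I will exhibit a violation of the EF1 condition for the ordered pair $(c,a)$.

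First I would compute $u_c(\pi_a)$. Under the contradiction hypothesis, $a$ receives none of its $c$-incident edges, so $\pi_a$ consists only of edges between $a$ and $b$; each such edge has zero marginal utility to $c$, hence $u_c(\pi_a)=0$ by the standing assumption that a set of edges of zero marginal utility has total utility $0$. Next I would bound the bundle $\pi_c$: it contains both $c$-$a$ edges, each of weight $-T$ with $T=\sum_{i\in[k]}s_i$, and we may assume $T>0$ (otherwise the \textsc{Partition} instance is a trivial yes-instance, so without loss of generality the $s_i$ are positive). Consequently, for \emph{every} $e\in\pi_c$ the set $\pi_c\setminus\{e\}$ still contains at least one $c$-$a$ edge $e'$, so monotonicity of $u_c$ gives $u_c(\pi_c\setminus\{e\})\le u_c(\{e'\})=-T<0=u_c(\pi_a)$. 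Since $\pi_c\neq\emptyset$ (it contains the two $c$-$a$ edges) and no $e\in\pi_c$ satisfies $u_c(\pi_c\setminus\{e\})\ge u_c(\pi_a)$, the orientation $\pi$ is not EF1 --- a contradiction. Hence at least one edge between $c$ and $a$ must be oriented toward $a$, and the analogous statement for $c$ and $b$ follows since exchanging $a$ and $b$ is a utility-preserving automorphism of $G$.

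I do not expect a real obstacle here: the argument is essentially Proposition~\ref{prop:EF1-graphs} transplanted to this three-vertex multigraph. The only points requiring a little care are the equality $u_c(\pi_a)=0$ --- which relies on the observation that if $a$ holds none of its edges to $c$ then its whole bundle is invisible to $c$ --- and the remark that $T>0$, which is what makes the leftover $c$-$a$ edge genuinely negative for $c$ and hence prevents EF1 from being salvaged by removing a single edge.
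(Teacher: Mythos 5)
Your proof is correct and follows essentially the same route as the paper: assume both $c$--$a$ edges point to $c$, note $u_c(\pi_a)=0$, and observe that removing any single edge from $\pi_c$ leaves a $-T$ edge, so $c$ strongly envies $a$. The only difference is that you quantify explicitly over every $e\in\pi_c$ (and justify $T>0$), which the paper's one-line argument leaves implicit; this is a harmless refinement, not a different approach.
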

\begin{proof}
	Let $\pi = (\pi_a, \pi_b, \pi_c)$ be an EF1 orientation of $G$. If both edges between $c$ and $a$ are oriented toward $c$, then for either one of them (call it $e$), we have $u_c(\pi_c \setminus \{e\}) \leq -T < 0 = u_c(\pi_a)$, so $\pi$ is not EF1.
\end{proof}

\begin{theorem}\label{thm:main-no-self-loop}
	Deciding whether a multigraph $G$ of chores has an EF1 orientation is NP-complete, even if $G$ is symmetric, has only three vertices and no self-loops, and utility functions are additive.
\end{theorem}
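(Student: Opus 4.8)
The plan is as follows. Membership in NP is immediate: given an orientation of a multigraph of chores, checking the EF1 inequality for each ordered pair of agents takes polynomial time (as already observed for Theorem~\ref{thm:main-multi}). For hardness we use the reduction from \textsc{Partition} already set up above: an instance $S=\{s_1,\dots,s_k\}$ is mapped to the three-vertex, symmetric, self-loop-free, additively-weighted multigraph $G$ with $k$ edges of weights $-s_1,\dots,-s_k$ between $a$ and $b$ and two edges of weight $-T$ between $c$ and each of $a,b$, where $T=\sum_i s_i$. This map is clearly polynomial. The entire argument will hinge on the trivial fact $T\ge\max_i s_i$ (true since all $s_i\ge 0$), which makes every $c$-incident edge a worst possible chore for $a$ and for $b$; the degenerate case $T=0$ can be decided directly, so we may assume $T\ge 1$. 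It then remains to show that $S$ admits an equipartition if and only if $G$ has an EF1 orientation.

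For the ``EF1 orientation $\Rightarrow$ equipartition'' direction, let $\pi$ be an EF1 orientation. Write $p\in\{1,2\}$ for the number of $c$-$a$ edges oriented toward $a$ and $q\in\{1,2\}$ for the number of $c$-$b$ edges oriented toward $b$; by Lemma~\ref{lemma:multigraph-reduction}, $p,q\ge 1$. Let $A$ (resp.\ $B$) be the total weight magnitude of the $a$-$b$ edges oriented toward $a$ (resp.\ $b$), so $A+B=T$. Since $\pi_a$ contains at least one $c$-$a$ edge (weight $-T$), which is a worst chore for $a$ because every $a$-$b$ edge has magnitude at most $\max_i s_i\le T$, the best edge $a$ can discard leaves utility $-(p-1)T-A$; and $u_a(\pi_b)=-B$ because only the $a$-$b$ edges in $\pi_b$ carry nonzero utility for $a$. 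The EF1 condition of $a$ toward $b$ therefore reads $-(p-1)T-A\ge -B$, which, using $B=T-A$, simplifies to $A\le (2-p)T/2\le T/2$. Symmetrically $B\le T/2$, and with $A+B=T$ this forces $A=B=T/2$; the split of the $a$-$b$ edges induced by $\pi$ is thus an equipartition of $S$.

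For the converse, given an equipartition $S=S_a\cup S_b$, orient each $a$-$b$ edge for an element of $S_a$ (resp.\ $S_b$) toward $a$ (resp.\ $b$), orient one $c$-$a$ edge toward $a$ and the other toward $c$, and one $c$-$b$ edge toward $b$ and the other toward $c$. Then $\pi_a$ and $\pi_b$ each have total utility $-3T/2$ with a unique worst chore of weight $-T$, so discarding it leaves $-T/2$, while $\pi_c$ consists of two weight-$(-T)$ edges, leaving $-T$ after one is discarded. A direct check of all six ordered pairs — using $u_a(\pi_b)=-T/2$, $u_a(\pi_c)=-T$, $u_c(\pi_a)=u_c(\pi_b)=-T$, and the symmetric values for $b$ — shows every EF1 inequality holds (with equality in the tight cases), so $\pi$ is an EF1 orientation.

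The conceptual crux, and the step I expect to require the most care, is the weight choice $-T$ on the $c$-incident edges: it must be large enough to dominate every $a$-$b$ edge, so that by Lemma~\ref{lemma:multigraph-reduction} each of $a$ and $b$ is forced to spend its single EF1 ``allowance'' on a $c$-incident edge, which is precisely what leaves the $a$-$b$ edges to balance on their own and so couples EF1-orientability to equipartitionability. Everything else — the short case analysis on $p,q$, the arithmetic simplification, and the six-pair verification in the converse — is routine bookkeeping.
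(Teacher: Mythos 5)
Your proposal is correct and follows essentially the same route as the paper: the identical three-vertex \textsc{Partition} gadget with two weight-$(-T)$ parallel edges from $c$ to each of $a,b$, Lemma~\ref{lemma:multigraph-reduction} to force at least one $c$-incident chore onto each of $a$ and $b$, and the resulting balance condition on the $a$--$b$ edges. If anything, your $(p,q)$ bookkeeping handles the subcase where both $c$-incident edges point toward $a$ (or $b$) more explicitly than the paper's case analysis does, but the argument is the same in substance.
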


\begin{proof}
	We have already seen in the proof of Theorem~\ref{thm:main-no-self-loop} that this problem is in NP. We now show that $S$ has an equipartition if and only if $G$ has an EF1 orientation. Suppose $S = S_1 \cup S_2$ is an equipartition. Let $\pi$ be the orientation that orients the edges between $a$ and $b$ in the way that naturally corresponds to $S_1 \cup S_2$, the two edges between $c$ and $a$ in opposite directions (one toward $c$ and the other toward $a$), and the two edges between $c$ and $b$ in opposite directions. We claim $\pi$ is EF1.
	
	Consider the envy that $c$ experiences. It suffices to consider the envy from $c$ to $a$ by symmetry. By construction, $u_c(\pi_c) = -2T < -T = u_c(\pi_a)$. Because both edges directed toward $c$ have utility $-T$, disregarding either of them alleviates the envy from $c$ to $a$.
	
	Next, consider the envy that $a$ and $b$ experience. Again by symmetry, it suffices to consider only $a$. The edge $e$ between $a$ and $c$ directed toward $a$ has the greatest negative utility among the edges in $\pi_a$ by the definition of $T$, so the definition of EF1 requires the envy $a$ experiences to be alleviated if $e$ is disregarded. Indeed, we have $u_a(\pi_a \setminus \{e\}) = (-1/2)T = u_a(\pi_b)$ and $u_a(\pi_a \setminus \{e\}) = (-1/2)T < -T = u_a(\pi_c)$. So, $\pi$ is EF1.
	
	Conversely, suppose $S$ has no equipartition. Let $\pi$ be any orientation of $G$ and $S_1 \cup S_2$ be the partition of $S$ that corresponds to the way $\pi$ orients the edges between $a$ and $b$. More specifically, let $s_i \in S_1$ if and only if the edge between $a$ and $b$ corresponding to $s_i$ is directed toward $a$, and let $S_2 = S \setminus S_1$.
	
	If both edges between $c$ and $a$ are directed toward $c$ or if both edges between $c$ and $b$ are directed toward $b$, then $\pi$ is not EF1 by Lemma~\ref{lemma:multigraph-reduction}, so we are done. It remains to consider the case that an edge $e_{ac}$ between $c$ and $a$ is directed toward $a$ and an edge $e_{bc}$ between $c$ and $b$ is directed toward $b$. The definition of EF1 requires $-\sum_{x \in S_1} x = u_a(\pi_a \setminus \{e_{ac}\}) \leq u_a(\pi_b) = -\sum_{x \in S_2} x$. Similarly, $-\sum_{x \in S_2} x = u_b(\pi_b \setminus \{e_{bc}\}) \leq u_b(\pi_a) = -\sum_{x \in S_1} x$. So, $\sum_{x \in S_1} = \sum_{x \in S_1} x$, a contradiction as $S$ has no equipartition.
\end{proof}

\begin{figure}
	\centering
	\begin{tikzpicture}
		\node[blkvertex] (a) at (-2, 0) {};
		\node[blkvertex] (b) at (2, 0) {};
		\node[blkvertex] (c) at (0, 2.5) {};
		
		\node (la) at (-2.3, 0) {$a$};
		\node (lb) at (2.3, 0) {$b$};
		\node (lc) at (0, 2.8) {$c$};

		\draw (a) to[bend left=35] node [left] {$-T$} (c);
		\draw (a) to[bend right=15] node [left] {$-T$} (c);
		
		\draw (b) to[bend left=15] node [right] {$-T$} (c);
		\draw (b) to[bend right=35] node [right] {$-T$} (c);
		
		\draw (a) to[bend left=25] node [above] {$-s_1$} (b);
		\draw (a) to node [above] {$-s_2$} (b);
		\node at (0, -0.2) {$\vdots$};
		\draw (a) to[bend right=25] node [below] {$-s_k$} (b);
	\end{tikzpicture}
	\caption{The multigraph $G$ in Theorem~\ref{thm:main-no-self-loop}. Here, $T \coloneqq \sum_{i \in [k]} s_i$.}
	\label{fig:multi-no-loop}
\end{figure}

\section{On the Additive Case}\label{section:additive}

All of our results also apply to the additive case in which every agent has an additive utility function, without any modification to the theorem statements and proofs. More generally, our results hold even if we allow each agent to have either a monotone utility function as previously described in Section~\ref{section:prelim} or an additive utility function, independently of other agents. We explain the reason below.

Our results pertaining to EF1 orientations stem from Proposition~\ref{prop:EF1-graphs}, which translates the EF1 orientation condition for each agent into a logically equivalent graph theoretical condition. Thus, any role that assumptions made on utility functions play lie entire within the proof of Proposition~\ref{prop:EF1-graphs}. It is easy to verify the correctness of this proof for additive utility functions as well.

As for EFX$_0$ orientations, Proposition~\ref{prop:one-neg-or-all-dummies} translates the EFX$_0$ orientation condition for each agent in an objective instances into an equivalent graph theoretical condition. So, we need to verify Proposition~\ref{prop:one-neg-or-all-dummies} and also the reduction of \textsc{EFX$_0$-Orientation} to \textsc{EFX$_0$-Orientation-Objective} for additive utility functions (i.~e.\ Lemma~\ref{lemma:reduc:FindEFXOrientation}). Again, it is straightforward to verify the proofs of both of these results for additive utility functions.

\section{Conclusion}\label{section:conclusion}

In this paper, we determined the complexity of finding EF1 and EFX$_0$ orientations of chores in both the graph and multigraph settings, and gave polynomial-time algorithms for the polynomial-time cases.


\begin{ack}
	We acknowledge the support of the Natural Sciences and Engineering Research Council of Canada (NSERC), funding reference number RGPIN-2022-04518. We also thank the anonymous reviewers for their constructive comments. 
\end{ack}


\bibliography{bibliography}

\begin{thebibliography}{18}
\providecommand{\natexlab}[1]{#1}
\providecommand{\url}[1]{\texttt{#1}}
\expandafter\ifx\csname urlstyle\endcsname\relax
  \providecommand{\doi}[1]{doi: #1}\else
  \providecommand{\doi}{doi: \begingroup \urlstyle{rm}\Url}\fi

\bibitem[Afshinmehr et~al.(2024)Afshinmehr, Danaei, Kazemi, Mehlhorn, and
  Rathi]{afshinmehr2024efx}
M.~Afshinmehr, A.~Danaei, M.~Kazemi, K.~Mehlhorn, and N.~Rathi.
\newblock {EFX} allocations and orientations on bipartite multi-graphs: A
  complete picture.
\newblock \emph{arXiv preprint arXiv:2410.17002}, 2024.

\bibitem[Akrami et~al.(2024)Akrami, Alon, Chaudhury, Garg, Mehlhorn, and
  Mehta]{akrami2024efx}
H.~Akrami, N.~Alon, B.~R. Chaudhury, J.~Garg, K.~Mehlhorn, and R.~Mehta.
\newblock {EFX}: a simpler approach and an (almost) optimal guarantee via
  rainbow cycle number.
\newblock \emph{Operations Research}, 2024.

\bibitem[Aspvall et~al.(1979)Aspvall, Plass, and Tarjan]{aspvall1979linear}
B.~Aspvall, M.~F. Plass, and R.~E. Tarjan.
\newblock A linear-time algorithm for testing the truth of certain quantified
  boolean formulas.
\newblock \emph{Information processing letters}, 8\penalty0 (3):\penalty0
  121--123, 1979.

\bibitem[Bhaskar and Pandit(2024)]{bhaskar2024efx}
U.~Bhaskar and Y.~Pandit.
\newblock {EFX} allocations on some multi-graph classes.
\newblock \emph{arXiv preprint arXiv:2412.06513}, 2024.

\bibitem[Caragiannis et~al.(2019)Caragiannis, Kurokawa, Moulin, Procaccia,
  Shah, and Wang]{caragiannis2019unreasonable}
I.~Caragiannis, D.~Kurokawa, H.~Moulin, A.~D. Procaccia, N.~Shah, and J.~Wang.
\newblock The unreasonable fairness of maximum {N}ash welfare.
\newblock \emph{ACM Transactions on Economics and Computation (TEAC)},
  7\penalty0 (3):\penalty0 1--32, 2019.

\bibitem[Chaudhury et~al.(2020)Chaudhury, Garg, and Mehlhorn]{chaudhury2020efx}
B.~R. Chaudhury, J.~Garg, and K.~Mehlhorn.
\newblock {EFX} exists for three agents.
\newblock In \emph{Proceedings of the 21st ACM Conference on Economics and
  Computation}, pages 1--19, 2020.

\bibitem[Christodoulou et~al.(2023)Christodoulou, Fiat, Koutsoupias, and
  Sgouritsa]{christodoulou2023fair}
G.~Christodoulou, A.~Fiat, E.~Koutsoupias, and A.~Sgouritsa.
\newblock Fair allocation in graphs.
\newblock In \emph{Proceedings of the 24th ACM Conference on Economics and
  Computation}, pages 473--488, 2023.

\bibitem[Deligkas et~al.(2024)Deligkas, Eiben, Goldsmith, and
  Korchemna]{deligkas2024ef1}
A.~Deligkas, E.~Eiben, T.-L. Goldsmith, and V.~Korchemna.
\newblock {EF1} and {EFX} orientations.
\newblock \emph{arXiv preprint arXiv:2409.13616}, 2024.

\bibitem[Gourv{\`e}s et~al.(2014)Gourv{\`e}s, Monnot, and
  Tlilane]{gourves2014near}
L.~Gourv{\`e}s, J.~Monnot, and L.~Tlilane.
\newblock Near fairness in matroids.
\newblock In \emph{ECAI}, volume~14, pages 393--398, 2014.

\bibitem[Hosseini et~al.(2023)Hosseini, Sikdar, Vaish, and
  Xia]{hosseini2023fairly}
H.~Hosseini, S.~Sikdar, R.~Vaish, and L.~Xia.
\newblock Fairly dividing mixtures of goods and chores under lexicographic
  preferences.
\newblock In \emph{Proceedings of the 2023 International Conference on
  Autonomous Agents and Multiagent Systems}, pages 152--160, 2023.

\bibitem[Hsu()]{hsu2024efx}
K.~Hsu.
\newblock {EFX} orientations of multigraphs.
\newblock To appear in ECAI 2025.

\bibitem[Karp(1972)]{karp1972reducibility}
R.~Karp.
\newblock Reducibility among combinatorial problems.
\newblock \emph{Complexity of Computer Computations}, pages 85--103, 1972.

\bibitem[Li et~al.(2024)Li, Li, Wei, Wu, and Zhou]{li2024complete}
B.~Li, M.~Li, T.~Wei, Z.~Wu, and Y.~Zhou.
\newblock A complete landscape of {EFX} allocations on graphs: Goods, chores
  and mixed manna.
\newblock \emph{arXiv preprint arXiv:2409.03594}, 2024.

\bibitem[Plaut and Roughgarden(2020)]{plaut2020almost}
B.~Plaut and T.~Roughgarden.
\newblock Almost envy-freeness with general valuations.
\newblock \emph{SIAM Journal on Discrete Mathematics}, 34\penalty0
  (2):\penalty0 1039--1068, 2020.

\bibitem[Procaccia(2020)]{procaccia2020technical}
A.~D. Procaccia.
\newblock Technical perspective: An answer to fair division's most enigmatic
  question.
\newblock \emph{Communications of the ACM}, 63\penalty0 (4):\penalty0 118--118,
  2020.

\bibitem[Sgouritsa and Sotiriou(2025)]{sgouritsa2025existence}
A.~Sgouritsa and M.~M. Sotiriou.
\newblock On the existence of {EFX} allocations in multigraphs.
\newblock \emph{arXiv preprint arXiv:2502.09777}, 2025.

\bibitem[Zeng and Mehta(2024)]{zeng2024structure}
J.~A. Zeng and R.~Mehta.
\newblock On the structure of envy-free orientations on graphs.
\newblock \emph{arXiv preprint arXiv:2404.13527}, 2024.

\bibitem[Zhou et~al.(2024)Zhou, Wei, Li, and Li]{ijcai2024p338}
Y.~Zhou, T.~Wei, M.~Li, and B.~Li.
\newblock A complete landscape of {EFX} allocations on graphs: Goods, chores
  and mixed manna.
\newblock In K.~Larson, editor, \emph{Proceedings of the Thirty-Third
  International Joint Conference on Artificial Intelligence, {IJCAI-24}}, pages
  3049--3056. International Joint Conferences on Artificial Intelligence
  Organization, 8 2024.
\newblock \doi{10.24963/ijcai.2024/338}.
\newblock URL \url{https://doi.org/10.24963/ijcai.2024/338}.
\newblock Main Track.

\end{thebibliography}

\end{document}